\newtheorem{theorem}{Theorem}
\newtheorem{lemma}[theorem]{Lemma}
\title{PAC: Assisted Value Factorisation with Counterfactual Predictions in Multi-Agent Reinforcement Learning}
\author{%
  Hanhan Zhou \\
    The George Washington University\\
  \texttt{hanhan@gwu.edu} \\
  \And
  Tian Lan \\
  The George Washington University\\
  \texttt{tlan@gwu.edu} \\
  \AND
  Vaneet Aggarwal \\
  Purdue University \\
  \texttt{vaneet@purdue.edu} \\
}
\begin{document}

\maketitle

\begin{abstract}
Multi-agent reinforcement learning (MARL) has witnessed significant progress with the development of value function factorization methods. It allows optimizing a joint action-value function through the maximization of factorized per-agent utilities. In this paper, we show that in partially observable MARL problems, an agent's ordering over its own actions could impose concurrent constraints (across different states) on the representable function class, causing significant estimation errors during training. We tackle this limitation and propose PAC, a new framework leveraging Assistive information generated from Counterfactual Predictions of optimal joint action selection, which enable explicit assistance to value function factorization through a novel counterfactual loss. A variational inference-based information encoding method is developed to collect and encode the counterfactual predictions from an estimated baseline. To enable decentralized execution, we also derive factorized per-agent policies inspired by a maximum-entropy MARL framework. We evaluate the proposed PAC on multi-agent predator-prey and a set of StarCraft II micromanagement tasks. Empirical results demonstrate improved results of PAC over state-of-the-art value-based and policy-based multi-agent reinforcement learning algorithms on all benchmarks.

\end{abstract}

\section{Introduction}

Many real-world reinforcement learning (RL) problems, such as autonomous vehicle coordination \cite{hu2019interaction} and network packet delivery \cite{ye2015multi}, often involve coordination among multiple entities and are naturally formulated as multi-agent reinforcement learning (MARL). 
Factorization-based methods have greatly progressed in dealing with the exponentially growing joint state-action space in MARL.
Under the notion of Centralized Training and Decentralized Execution (CTDE), algorithms like VDN \cite{sunehag2017value} and QMIX \cite{rashid2018qmix} learn a centralized joint action-value function $Q_{tot}$ through a monotonic factorization into local per-agent value functions so that $Q_{tot}$ can be maximized as long as each per-agent value function is maximized by local action selection.
Even conditioned on joint state information, a monotonic mixing network for $Q_{tot}$ is shown to restrict the representable function class. Despite efforts to mitigate this, e.g., QTRAN \cite{son2019qtran} and WQMIX \cite{rashid2020weighted}, in practice, they empirically perform poorly in complex MARL environments with partial observability \cite{mahajan2019maven,samvelyan2019starcraft,wang2020qplex}. 

In this paper, we show that in partially observable MARL problems (as exemplified by a multi-state matrix game), an agent's ordering over its own actions could impose concurrent constraints on the representable action-value $Q_{tot}$ in different states. This restriction causes large estimation errors of $Q_{tot}$ during training. It cannot be addressed by existing methods, e.g., adding a state-value correction term (like QTRAN \cite{son2019qtran}) or introducing importance weights on dominant state-actions (like WQMIX \cite{rashid2020weighted}). It aggravates the relative over-generalization problem \cite{castellini2019representational} -- when fully decomposed, per-agent value functions only depend on partial observations and local actions. It renders optimal decentralized policies unlearnable when the employed value function does not have enough representational ability. Solving tasks that require significant coordination in partially observable problems remains a key challenge.

The key insight of this paper is that an accurate factorization in partially observable MARL problems requires improved representation of the value functions, which is crucial to supporting the learning of optimal decentralized policies. We propose a novel architecture, denoted as PAC, for assisted value factorization with counterfactual predictions. It leverages a counterfactual baseline that marginalizes out an
agent’s potential optimal action while keeping all other agents' actions fixed. The counterfactual predictions of potential optimal actions enable (i) training assistive information that is generated using the variational inference method to expand the representational ability of value functions, and (ii) directly optimizing the factorization of $Q_{tot}$ through a new counterfactual loss. We note that in contrast to communication-based methods like NDQ \cite{wang2019learning}, the use of assistive information in PAC aims to directly improve per-agent value functions in factorization. Minimizing our proposed counterfactual loss together with an information bottleneck loss ensures that such assistive information is relevant, optimal, and succinct for value function factorization in PAC. 

Relying on the accuracy of assisted factorization, we further decouple the decision-making of local value function (through separate policy networks) from value function networks, which allows PAC to maintain full CTDE despite the use of assistive information during training and also enables the maximization of entropy to encourage exploration. The accuracy of assisted factorization makes PAC outperform policy-based methods with factorization like DOP \cite{wang2020dop} and FOP \cite{zhang2021fop}), especially on difficult tasks that require more coordination among the agents since sub-optimality from one agent's policy might propagate and aggravate the training of other agents through the centralized critic \cite{wang2020dop,su2020value}. Our key contributions are summarized as follows:
\begin{itemize}[leftmargin=*]
\vspace{-0.05in}
\item We propose a novel method, PAC, the first framework for value function factorization by providing variational-encoded counterfactual predictions as assistive information to facilitate per-agent value function estimation. 
\vspace{-0.05in}
\item The counterfactual predictions can be efficiently computed from a feed-forward baseline $Q^*$ and based on a local search to facilitate direct optimization of $Q_{tot}$ factorization through a new counterfactual loss.
\vspace{-0.05in}
\item PAC decouples individual agents' policy networks from value function networks and thus maintains fully decentralized execution while enjoying the benefits of assisted value function factorization. It also leads to an entropy maximization MARL for more effective exploration.
\vspace{-0.05in}
\item We demonstrate the effectiveness of PAC and show that PAC significantly outperforms both state-of-the-art value-based and policy-based multi-agent reinforcement learning algorithms on the StarCraft II micromanagement challenge in terms of better performance and faster convergence.
\end{itemize}

\section{Model and Background}

\noindent {\bf Model:} Consider a fully cooperative multi-agent task as decentralized partially observable Markov decision process (DEC-POMDP) \cite{oliehoek2016concise}, given by a tuple $G=\langle I, S, U, P, r, Z, O, n, \gamma\rangle$, where $I\!\equiv\{1,2, \cdots, n\}$ is the finite set of agents. The state is given as $s \in S$, from which each agent draws its own observation from the observation function $o_{i} \in O(s,i): S \times A \!\! \rightarrow \!\! O$. At each timestamp $t$, each agent $i$ choose an action $u_{i} \in U$, composing a joint action selection  $\boldsymbol{u}$. A shared reward is then given as $r\!\!=\!\!R(s,\boldsymbol{a}) : S \times \!\! \mathbf{U}\!\!\! \rightarrow \!\! \mathbb{R} $, with the next state of each agent is  $s^{\prime}$ with transition probability function $P\left(s^{\prime} | s, \mathbf{u}\right) : S \times U \rightarrow [0,1]$. Each agent has an action-observation history $\tau_{i} \in \mathrm{T} \equiv(O \times U)^{*}$ from its limited local observations. Then the overall objective is to find a joint policy $\boldsymbol{\pi} = \left< \pi_{1}, ... ,\pi_{n} \right>$ which corresponds to the joint action-value function $Q(s_{t},\boldsymbol{u}_{t}) = \mathbb{E}[R_t | s_t, \boldsymbol{u}_t]$, that is used to maximize the joint policy function $V(\boldsymbol{\tau}, \boldsymbol{u})=\mathbb{E}_{s_{0: \infty}, \boldsymbol{u}_{0: \infty}}\left[\sum_{t=0}^{\infty} \gamma^{t} r_{t} | s_{0}=s, \boldsymbol{u}_{0}=\boldsymbol{u}, \boldsymbol{\pi}\right]$ , and $\gamma \in[0,1)$ is the discount factor. Quantities in bold denote a joint quantities across all agents, and quantity with super script $i$ denote a quantity specifically belong to agent $i$.



\noindent {\bf Value Decomposition:} Following the paradigm of CTDE, VDN \cite{sunehag2017value} and QMIX \cite{rashid2018qmix} are popular and representative methods for value function decomposition which learn a centralized action value $Q{\operatorname{tot}}$ through value decomposition assuming its additivity and monotonicity. In VDN, per-agent sum of the local value is used to calculate the action value $Q^{\operatorname{tot}} = \sum_{a=1}^{n} Q_{a}(\tau_a,u_a)$. In QMIX, a monotonic mixing function of each agents' local utilities is proposed as $Q^{\operatorname{tot}}(\boldsymbol{\tau}, \mathbf{u}, s ; \theta) = f_{\theta}\left(\boldsymbol{s},[Q_{1}(\tau_{1}, a_{1}), ..., Q_{n}(\tau_{n}, a_{n})]\right)$, where 
$\frac{\partial Q^{\operatorname{tot}}(\boldsymbol{\tau}, \boldsymbol{u})}{\partial Q_{i}\left(\tau_{i}, u_{i}\right)}>0, \forall i \in \mathcal{N}$. The monotonic mixing function is able to ensure the global optimal $Q^{\operatorname{tot}}$ yields the same result as the set of individual optimal $Q_{i}$, where $f_{\theta}$ is approximated by the monotonic mixing network parameterized by $\theta$. The weights are generated by a separate hyper-network that conditions on the global state, where its monotonicity is ensured by performing absolution function on generated parameters for non-negative mixing weights. Then QMIX is trained in a way like DQN \cite{sunehag2017value} with goal to minimise the squared TD error on minibatch  of $b$ samples from the replay buffer as $\sum_{i=1}^{b}(Q^{\operatorname{tot}}(\boldsymbol{\tau}, \mathbf{u}, s ; \theta) - y^{\operatorname{tot}} )^2$ , where $y^{\operatorname{tot}} = r + \gamma \operatorname{max}_{u'} Q^{\operatorname{tot}}(\boldsymbol{\tau'}, \mathbf{u'}, s' ; \theta')$, $r$ is the global reward and $\theta'$ is the parameters of the target network whose parameters are periodically copied from $\theta$ for training stabilization.


\section{Existing Limitations in Partially Observable Multi-state Problems}






\begin{figure}[h]
\centering
\includegraphics[width=0.95\textwidth]{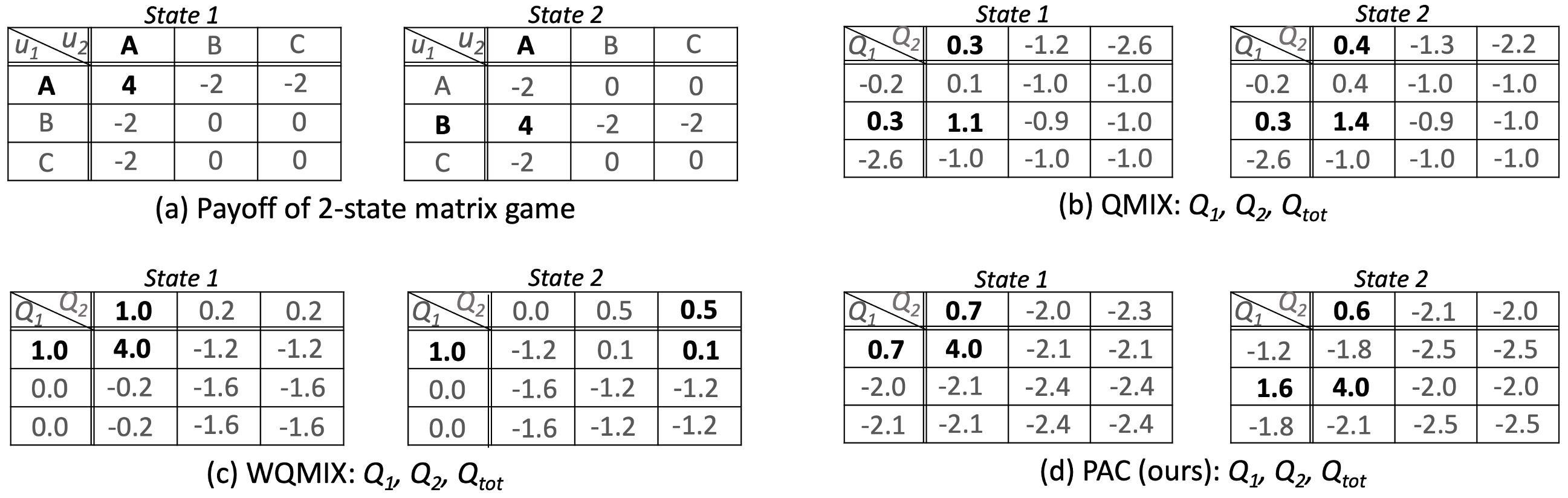}
\caption{In partially observable multi-state games, $Q_{tot}$ is limited to a restricted function class imposing concurrent inequality constraints on $Q_{tot}$ in different states. It causes large factorization error and thus erroneous computation of $\operatorname{argmax}\ (Q_{tot})$. PAC successfully addresses this problem by leveraging assistive information trained using counterfactual predictions, with the direct goal of achieving better value function factorization.}
\label{fig:matrix_game}
\end{figure}
The limitations of monotonic value function factorization have been identified via single-state matrix games, which inspired new algorithms like QTRAN \cite{son2019qtran} and Weighted-QMIX \cite{rashid2020weighted}. In this paper, we show in multi-state problems with partial observability, that one agent's ordering over its own actions could impose concurrent inequality constraints on the joint action-value $Q_{tot}$ in different states, resulting in restrictive function representations of $Q_{tot}$ with large estimate error.


Consider a Markov decision process (MDP) consisting of 2 states with 0.5 transition probabilities between them and two payoff matrices shown in Figure~\ref{fig:matrix_game}(a). Suppose that agent 1 has the same partial observation $o_1$ in states $s^{(1)}$ and $s^{(2)}$. Then, its per-agent value function $q_1(\cdot,\tau_{1})$ computed from partial observation $o_1$ are also the same in both states. Due to the monotonicity of the mixing network (even though it is provided with complete joint state information), for any $u_1$ and $u_1'$ with ordering $q_1(u_1,\tau_{1})\ge q_1(u_1',\tau_{1})$ without loss of generality, we must simultaneously have $Q_{tot}(u_1,u_2,s^{(1)})\ge Q_{tot}(u_1',u_2,s^{(1)})$ and $Q_{tot}(u_1,u_2,s^{(2)})\ge Q_{tot}(u_1',u_2,s^{(2)})$ for any action $u_2$ of agent 2 in both states. Representing $Q_{tot}$ on this restricted function class results in significant error in QMIX as shown in Figure~\ref{fig:matrix_game}(b). Although Weighted-QMIX introduces an importance weighting on the dominant state-actions of this game, it only improves the approximation in state 1 and yet causes even higher error in state 2 (in Figure~\ref{fig:matrix_game}(c)). This is exactly because of the inequality restrictions simultaneously imposed on both states, limiting the representational ability of the value functions in partially observable problems. 

Clearly, additional information is needed to facilitate successful factorization in these partially observable multi-state problems. It is also worth noticing that even with agent-wise communications, NDQ \cite{wang2019learning} also fails the task since the communication messages and related loss functions are not designed to drive better factorization. It underscores the importance of making effective use of the right information for successful factorization. We put the results from other methods in Appendix A.3. Our proposed PAC (in Figure~\ref{fig:matrix_game}(d)) addresses this issue by leveraging assistive information trained by a novel notion of counterfactual predictions. More precisely, counterfactual predictions of potentially optimal agent actions are readily computed from an unrestricted, feed-forward baseline $Q^*$. 
Training per-agent value functions using a new counterfactual assistance loss leads to substantially improved estimate $Q_{tot}$ and correct $(\operatorname{argmax} Q_{tot})$ in different states. 
We compare PAC with a number of state-of-the-art value-based and policy-based methods with function factorizations and demonstrate its performance in challenging partially observable tasks that require significant coordination. 
\section{Methods}
To overcome the limitations of existing Value Decomposition methods discussed in Sec. 3, in this section, we introduce the idea of assisted optimal joint policy factorization and propose such a method under the multi-agent soft actor-critic framework.

\noindent {\bf Framework overview.}
Fig.~\ref{fig:fram} shows the architecture of the learning framework. There are three main components in PAC : (1) a weighted $Q^{tot}$ utilizing per-agent local critics $q_i(u_i,\tau_i,m_i)$, where $m_i$ serves as assistive information aiding value factorization, (2) an unrestricted joint action estimator $\hat{Q}^*$, which serves as a baseline estimator of $Q^*$ and allows the computation of counterfactual predictions from a quick local search, and (3) an assisted information generating module, which is able to utilize the deep variational bottleneck method to encode the counterfactual optimal joint action selection. In addition to TD errors for $Q_{tot}$ and $Q^*$, the PAC system is trained using three more loss functions: counterfactual assistance loss $L_{CA}$ for optimized value function factorization, information bottleneck loss $L_{LB}$ for succinct and effective assistive information generation, and local policy loss $L_{LP}$ for training factorized agent policy with entropy maximization. 

\begin{figure}[h]
\centering
\includegraphics[width=0.9\textwidth]{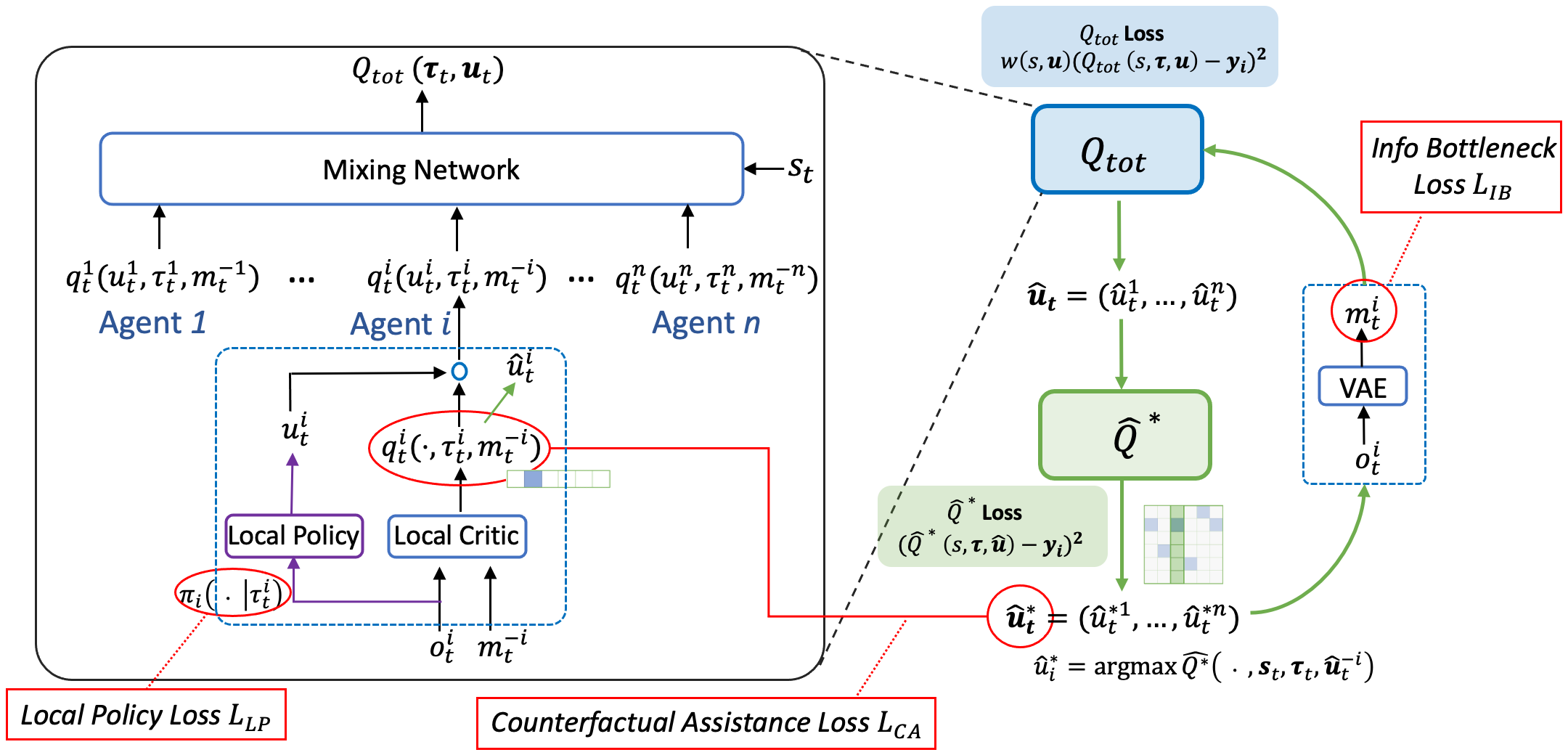}
\caption{The overall architecture of PAC. With the help of assistive information, $m_t^{-i}$, counterfactual predictions $\hat{u}_t^*$ -- which are obtained through an approximation of optimal $Q^*$ and an efficient local search -- are used to directly train the per-agent value functions $q_t^i(\cdot, \tau_t^i, m_{t}^{-i})$ with respect to a new counterfactual assistance loss. It results in significantly improved value function factorization for partially-observable MARL problems.} 
\label{fig:fram}
\end{figure}


\textbf{Generating Counterfactual Predictions}
One of the key insights underlying this method is that the optimal joint action selection ${u}^{*}$ from the centralized $\hat{Q}^*$ can be used as a direct information assisting $Q_{tot}$ explicitly. Although the complexity of computing $\textbf{u}^{*} = \textrm{argmax}_{u} \mathbb{E}[Q^{*}(s, \tau, \cdot)]$ is impractical as it grows in $O(|U|^{n})$, however, it is possible to compute a local estimation of $\hat{\textbf{u}}^*$ as $\hat{u}^{*}_{i} = \operatorname{argmax} \hat {Q}^{*}(s, \boldsymbol{\tau}, \boldsymbol{u_{-i}},\cdot)$, which reduces the computational complexity from exponential to linear level of $O(n \cdot |U|)$.

However, even with $\textbf{u}^*$ provided, in what manner it can benefit value function factorization is not researched. Directly feeding this to the policy or critic network is counterproductive as
the neural network can easily learn that this specific input is the key to reducing the TD-error 
while it is not helpful to explore or train the local policy and the training process might stall. Using an extra loss function, e.g. cross-entropy, as an effort to train the network to make decisions $u$ that are close to $\textbf{u}^*$ sounds promising, however, it only quantifies the error without actually guiding the training since it may trap the policy in local sub-optimum and further limits the exploration process.

Inspired by difference rewards \cite{proper2012modeling} and counterfactual baseline \cite{foerster2018counterfactual} for policy gradients, we propose a counterfactual assistance loss. For each agent $a$ we can use an advantage function that compares the $\hat{u}^*$ from $Q^*$ and $u$ from $Q^{tot}$ to a counterfactual baseline that marginalizes out the agent's potential optimal action which relegates $\hat{u}^{*}_a$, while keeping all other agents' action $\boldsymbol{u}^{-a}$ fixed, a counterfactual assistance loss is proposed as:
\begin{equation}
\mathcal{L}_{CA}(\textbf{u}, \pi) = \sum _{a} \textrm{log}\pi(u_{i}^{a}|\tau_{i}^{a}) \sum _{u_i \in [\boldsymbol{u}-{u}_{i}^*]}[ q(\hat{u}_i^* , o_i, m_i) - \pi(u_i , o_i) q(u_i , o_i, m_i) ]
\end{equation}

This looks similar to calculating the counterfactual advantage as used in COMA, however, in COMA such counterfactual advantage is calculated for centralized critic which is computationally expensive and unstable which limits its performance, while our counterfactual assistance loss is providing direct guidance in training the policy network towards the direction of $u_{i}^{*}$. We later show in ablation studies that this loss is contributing to the performance improvements.

\textbf{Generating Assistive Information }
Additionally, we show $\boldsymbol{\hat{u}^{*}}$ can be encoded and then used as input for local critics $q_i(u_i,\tau_i,m_i)$, as assistive information aiding value factorization. As previous works suggests \cite{wang2019learning, zhou2022value}, we consider the information bottleneck method \cite{tishby2000information}, with the Markov chain \textbf{o}-\textbf{m}-${\hat{\textbf{u}}}^*$ during encoding. To be specific, we regard the internal representation of the intermediate layer as stochastic encoding $m_i$ of input source $o_i$, with the goal to learn an encoding that is maximally informative about the result $\hat{u}_i^*$.
Formally, the objective for each agent $i$ can be written as: 
\begin{equation}
J_{IB}\left(\boldsymbol{\theta}_{m }\right)=\sum_{j=1}^{n}\left[I_{{\theta}_{m}}\left(\hat{u}_j^* ; m_{i} | \mathrm{o}_{j}, m_{-j}\right)-\beta I_{\boldsymbol{\theta}_{m}}\left(m_{i};o_{i}\right)\right]
\end{equation}
$m_{i}$ is an instance of random variable of assitive information that is drawn from multivariate Gaussian distribution $\boldsymbol{N}(f_{m}(o _{i};\theta _{m}),\boldsymbol{I}))$, $\theta _{m}$ is the parameters of encoder $f_{m}$ and $\boldsymbol{I}$ is an identity matrix, a Lagrange multiplier parameter $\beta \geq $ 0 is used to control the trade-off between the encoding the necessary information and reaching maximal compression. Yet this does not lead to a learnable model, since this mutual information $I$ is intractable. With the help of variational approximation, specifically, deep variational information bottleneck \cite{alemi2016deep}, we are able to parameterize this model using a neural network. We then derive and optimize a variational lower bound of such objective as 
\begin{equation}\mathcal{L}_{IB}\left(\boldsymbol{\theta}_{m}\right) = 
\mathbb{E}_{{o_{i}} \sim \mathcal{D}, m_{j} \sim f_{m}} [-\mathcal{H}[p(\hat{u}_j^* | \mathbf{o}), q_{\psi }(\hat{u}_j^* | \mathrm{o}_{j}, \boldsymbol{m})]+\beta D_{\mathrm{KL}}(p(m_{i} | \mathrm{o}_{i}) \| q_{\phi}(m_{i}))]
\end{equation}

where $\mathcal{H}$ is the entropy operator, $D_{\mathrm{KL}}$ denotes Kullback-Leibler divergence operator and $q_{\phi}( m_{i})$ is a variational posterior estimator of $p(m_{i})$ with parameters $\phi$.
Using the loss above a message encoder $f_{m}$ with parameters $\theta _{m}$ is trained to generate information $m_{i} \sim \boldsymbol{N}(f_{m}(o _{i};\theta _{m}),\boldsymbol{I}))$ that is useful for decision making. 
Compared to \cite{wang2019learning,zhou2022value} that encodes the general state information and other agents' action selections as communication messages, which can not reduce the uncertainties in action-value functions; using the encoded $\hat{u}^*$ as assistance information can provide an explicit direction toward better individual value estimation and thus a joint value factorization. Detailed derivations and proofs can be found in Appendix A.1.



\textbf{Factorized Policy Iteration with Entropy Maximization}
We leverage factorized policies to maintain decentralized execution in PAC despite the use of assistive information for training. Recent works have shown that Boltzmann exploration policy iteration is guaranteed to improve the policy and converge to optimal with unlimited iterations and full policy evaluation\cite{haarnoja2018soft}, within MARL domain it can be defined as:
$
J(\pi)=\sum_{t} \mathbb{E}\left[r\left(\mathbf{s}_{t}, \mathbf{u}_{t}\right)+\alpha \mathcal{H}\left(\pi\left(\cdot | \mathbf{s}_{t}\right)\right)\right]
$
where $\alpha$ denotes the temperature parameter that is used to adjust the balance between maximizing the entropy for a better exploration and maximizing the expected reward. 
We present one possible method of expanding this to MARL problems, to achieve decentralized policies in PAC and to encourage efficient exploration.

Several recent works are proposed to expand actor-critic or soft-actor-critic in to factorization based MARL methods \cite{zhang2021fop,wang2020dop,su2020value}, they all follow a centralized critic with decentralized actors (CCDA) framework. In this work, we train the actors in a centralized but factorized way. Unlike \cite{su2020value} that reuses the local utility network for both actor or critic or \cite{zhang2021fop} which consists of a soft V-network and a soft Q-network for local policy net,
we use a separate network as policy networks and propose a centralized but factorized soft policy iteration with factorized Maximum-Entropy trained as: 
\begin{equation}
\begin{array}{l}
\begin{aligned}
\mathcal{L}_{LP}(\pi) &=\mathbb{E}_{\mathcal{D}}\left[\boldsymbol\alpha \log \boldsymbol{\pi}\left(\boldsymbol{u}_{t} | \boldsymbol{\tau}_{t}\right)-Q^{\pi}_{tot}\left(\boldsymbol{s_{t}}, \boldsymbol{\tau_{t}}, \boldsymbol{u}_{t}, \boldsymbol{m}_{t}\right)\right] \\
&= -q^{\operatorname{mix}}\left(\boldsymbol{s}_{t}, \mathbb{E}_{\pi^{i}}\left[q^{i}\left(\tau_{t}^{i}, u_{t}^{i}, m_{t}^{i}\right)-\alpha^{i} \log \pi^{i}\left(u_{t}^{i} | \tau_{t}^{i}\right)\right]\right)
\end{aligned}
\end{array}
\end{equation}
where $q^{mix}$ is the monotonic value decomposition network with $u_{i}\sim \pi_{i}(o_{i})$ and $\mathcal{D}$ is the replay buffer used to sample training data. 

As noted in previous research, choosing the temperature term in soft-actor-critic is non-trivial as it can vary in an unpredictable way when the policy becomes better as the training continues \cite{zhang2021fop}. Instead of finding each individual temperature using approximation functions, since we use a centralized but factored policy, we consider one global temperature that is automatically adjusted by considering it as a constrained optimization problem as a parameter that is trained independently with loss:
\begin{equation}
\mathcal{J}(\alpha) = \mathbb{E}_{u_{t} \sim \pi_{t}} [-\alpha^{i} \log \pi_i(u_{t} | s_{t}) - \boldsymbol\alpha \mathcal{H}_0] ,
\end{equation}
where $\mathcal{H}_0$ is a fixed entropy term so the temperature term $\alpha$ is 
generally decreasing such that the degree of exploration is reduced as the training proceeds\cite{haarnoja2018soft}.




\textbf{Training Process}
So far we have discussed the components of our method, to formulate the new scalable RL algorithm, we now explain the implementation and the centralized training process for deep RL under DEC-POMDP. In the decentralized execution phase, only the policy (agent) network (marked as local policy in Fig.1 ) is enabled so that full CTDE is maintained.


Despite that monotonicity constraints limitations to the expressive power of the mixing network, recent research \cite{su2020value,hu2021riit} demonstrated that the IGM principle \cite{son2019qtran} as equivalent to joint greed actions and individual greedy actions is crucial, as it greatly promotes the sample efficiency; meanwhile most designs with only one unrestricted joint action-value function show poor empirical performance\cite{son2019qtran,son2020qtran++}, and thus we follow the design of WQMIX and keep both $Q_{tot}$ network and $\hat{Q}^*$ network.

The $\hat{Q}^*$ architecture (Green part in Fig. 1) is used as the estimator for ${Q}^*$ from unrestricted functions, where its mixing network is a feed-forward network that takes its local utilities.
While the $Q_{tot}$ module (blue part in Fig.1) looks similar to QMIX and many other factorized methods, there is a significant difference, for its local estimator is provided with assistive information and the local value feeding to the network $q({u}, \tau_i, m_i)$ is selected based on local policy, rather than taking 
$\operatorname{argmax} {q}_{i}(\cdot,{\tau}_{i}, m_{i} )$. Then $Q_{tot}$ and ${Q}^*$ are trained with the objective to reduce their respective loss as:

\begin{equation}
\mathcal{L}_{\hat{Q^{*}}}({\theta}) = \sum_{i}
(\hat{Q^{*}}(s, \boldsymbol{\tau}, \hat{\boldsymbol{u}}) - y_i)^2.
\label{eq:qcentral_loss}
\end{equation}
\begin{equation}
\mathcal{L}_{{Q_{tot}}}({\theta}) = \sum_{i}
w(s,\textbf{u})(Q_{tot}(s, \boldsymbol{\tau},\textbf{u}, \boldsymbol{m} ) - y_{i})^2 
\end{equation}
where $\hat{u}_{i} = \operatorname{argmax} {q}_{i}(\cdot,{\tau}_{i}, m_{i} )$, $y_{i}=r+\gamma \hat{Q}^{*}(s^{\prime}, \boldsymbol{\tau}^{\prime}, \operatorname{argmax}_{\mathbf{\hat{u}}^{\prime}} Q_{tot}(\boldsymbol{\tau}^{\prime}, \hat{\mathbf{u}}^{\prime}, s^{\prime}; \boldsymbol{\theta}))$ with $\boldsymbol{\theta}^{-}\ $ being the parameters of the target network that are periodically updated to stabilize the training. $u_{i}\sim \pi_{i}(o_{i})$ and $w(s,\textbf{u})$ is the weighting function\footnote{In this work we follow the weighting function design of OW-QMIX in WQMIX \cite{rashid2020weighted}, as w = 1 if $Q_{tot}(s, \boldsymbol{\tau},\textbf{u}, \boldsymbol{m} ) - y_{i}$ <0,  w = $\alpha$ otherwise.} to ensure the weighted projection can recover the correct maximal joint action value function for any Q\cite{rashid2020weighted}. Note the action selection for loss function in our method is different from the original design of QMIX and WQMIX. Apart from the independently updated entropy term $\alpha$, all other components (including the message encoder) are trained in an end-to-end manner with the objective to minimize the weighted sum of all losses proposed above, including the counterfactual loss and information bottleneck loss as 
\begin{equation}
\mathcal{L}(\theta) =\mathcal{L}_{LP} + \mathcal{L}_{CA} + \mathcal{L}_{IB} + \mathcal{L}_{\hat{Q^{*}}} + \mathcal{L}_{{Q_{tot}}} 
\end{equation}
Detailed derivations can be found in Appendix A.1

\section{Experiments}
In this section, we compare the results with several state-of-the-art MARL methods on Predator-Prey \cite{bohmer2020deep} and selected StarCraft Multi-Agent Challenge (SMAC) \cite{samvelyan2019starcraft} scenarios as benchmarks. More details on implementation, experiment settings, and hyperparameters are included in Appendix A.3. Code is available at \href{https://github.com/hanhanAnderson/PAC-MARL}{github.com/hanhanAnderson/PAC-MARL}.
\subsection{Cooperative Predator-Prey}

We first consider Predator-Prey \cite{bohmer2020deep}, which is a partially-observable multi-agent environment that involves 8 agents in cooperation as predators to catch 8 built-in-AI controlled units as prey on a 10x10 grid. Only when two or more predator agents surround and capture the same prey at the same time, it is considered a successful capture. We consider two settings: a simpler task without failed capture punishment and a harder task with a punishment reward of -2 when a capture attempt fails. This is to test the baseline algorithms on relative over-generalization and monotonicity constraint limitations. More details about this environment can be found in Appendix A.3.2. 
\begin{figure*}[h]
  \centering
  \begin{subfigure}[b]{0.495\textwidth}
    \centering
    \includegraphics[width=\textwidth]{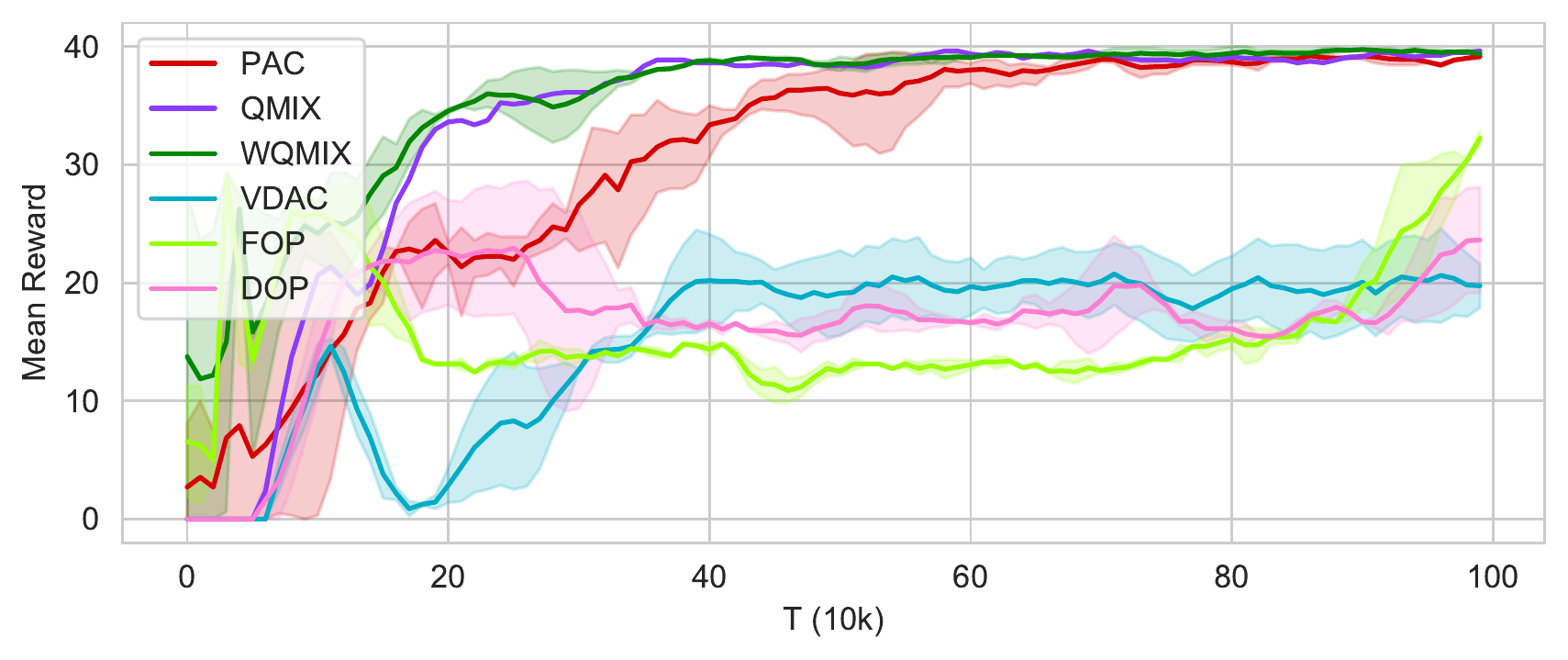}
    \caption{PP without mis-capture penalty}
    \label{fig:ppone}
  \end{subfigure}
  \hfill
  \begin{subfigure}[b]{0.495\textwidth}
    \centering
    \includegraphics[width=\textwidth]{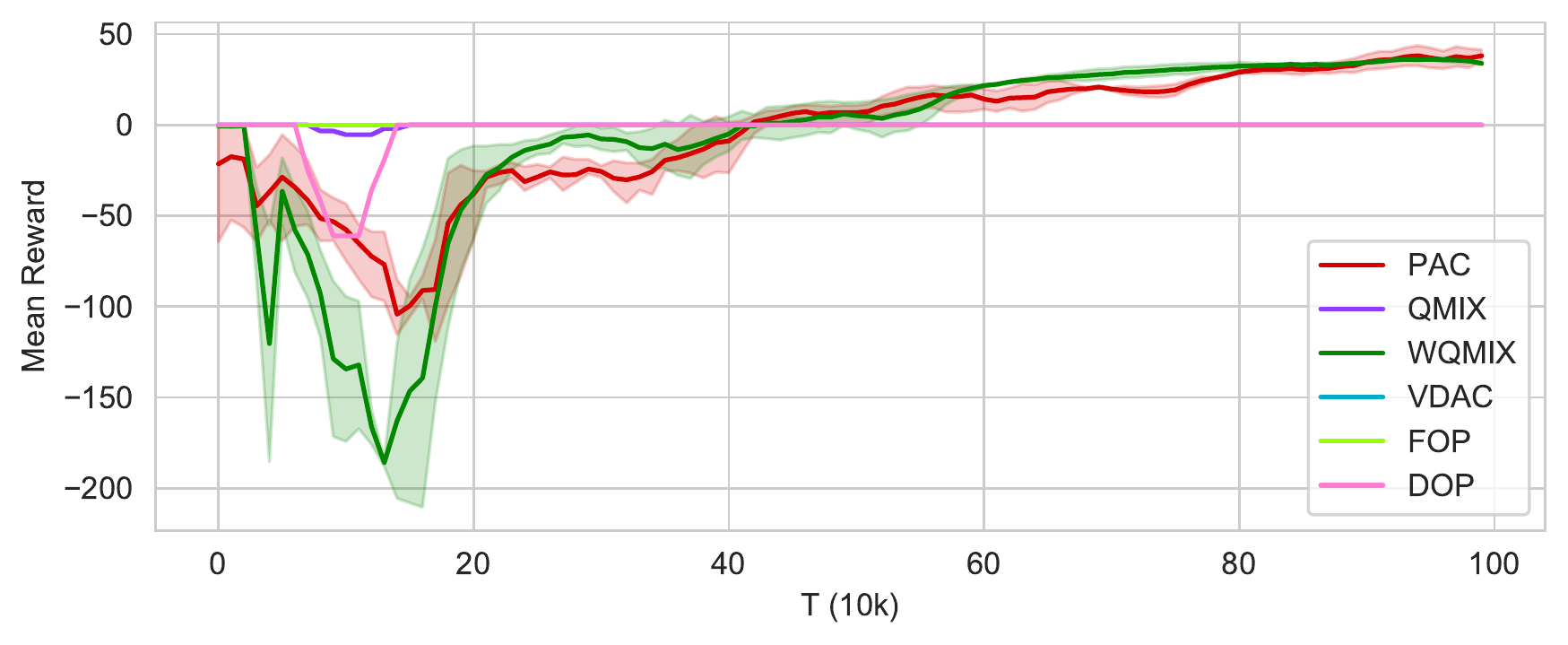}
    \caption{PP with mis-capture penalty}
    \label{fig:pptwo}
  \end{subfigure}
  \caption{Results on Predator-Prey benchmark}
\end{figure*}

As shown in Fig. \ref{fig:ppone}, for the easy task value-based method like QMIX, WQMIX can learn a policy that reaches the highest reward. However, in this environment, we see the performance gap between state-of-the-art value-based and policy-based methods since policy-based methods might suffer from relative over-generalization problem \cite{wang2020qplex, gupta2021uneven, panait2006biasing} or making a poor trade-off in joint policy representation. Our work as an actor-critic method can match the highest performance, although it takes a slightly longer time to converge, potentially due to the training of the variational inference models.

On the other hand, for the harder task that requires significant coordination among agents as shown in Fig. \ref{fig:pptwo}, increased exploration and joint representation capabilities are required to finish the task which makes WQMIX and PAC the only two methods capable of learning a usable policy. Intuitively, when joint actions from uncoordinated decision-making occur more than coordinated ones, the penalty term will then dominate the average return from the environment and further the value estimation of each agent's local utility. 
We attribute these performance improvements to the use of entropy maximization which promotes more exploration attempts and the higher coordination abilities from assistive information.

\subsection{SMAC}

We then consider the SMAC \footnote{In this paper all SMAC experiments are carried out utilizing the latest \texttt{SC2.4.10}, performance is always not comparable across versions. We implemented our algorithm based on an open-sourced codebase \cite{hu2021riit} and acquired the results of QMIX and WQMIX from it.} as the second benchmark, wherein each agent controls a unit cooperating with other friendly units in combat against the game's built-in AI-controlled units. The combat can be symmetric (same units for both parties) or asymmetric. Since it is shown that most state-of-the-art algorithms perform really well on easy and medium maps, which limits the demonstration of clear comparison and the potential improvements, we begin our test in six maps including two hard maps (\texttt{5m\_vs\_6m, 3s\_vs\_5z}) and four super-hard maps (\texttt{MMM2, 27m\_vs\_30m, 6h\_vs\_8z, corridor}). Selected maps are classified as hard or super hard due to (i) very large action space like \texttt{27m-vs-30m}, (ii) requiring advanced exploration strategies like corridor (iii) requiring a higher level of coordination between the agents like \texttt{6h\_vs\_8z} etc. We use the same default environment setting for all benchmark algorithms throughout the test. Each baseline algorithm is trained with 4 random seeds and evaluated every 10k training steps with 32 testing episodes. Details of the environment setup and hyper-parameter settings are listed in Appendix A.3.3. 
\begin{figure*}[h]
     \centering
     \begin{subfigure}[b]{0.323\textwidth}
         \centering
         \includegraphics[width=\textwidth]{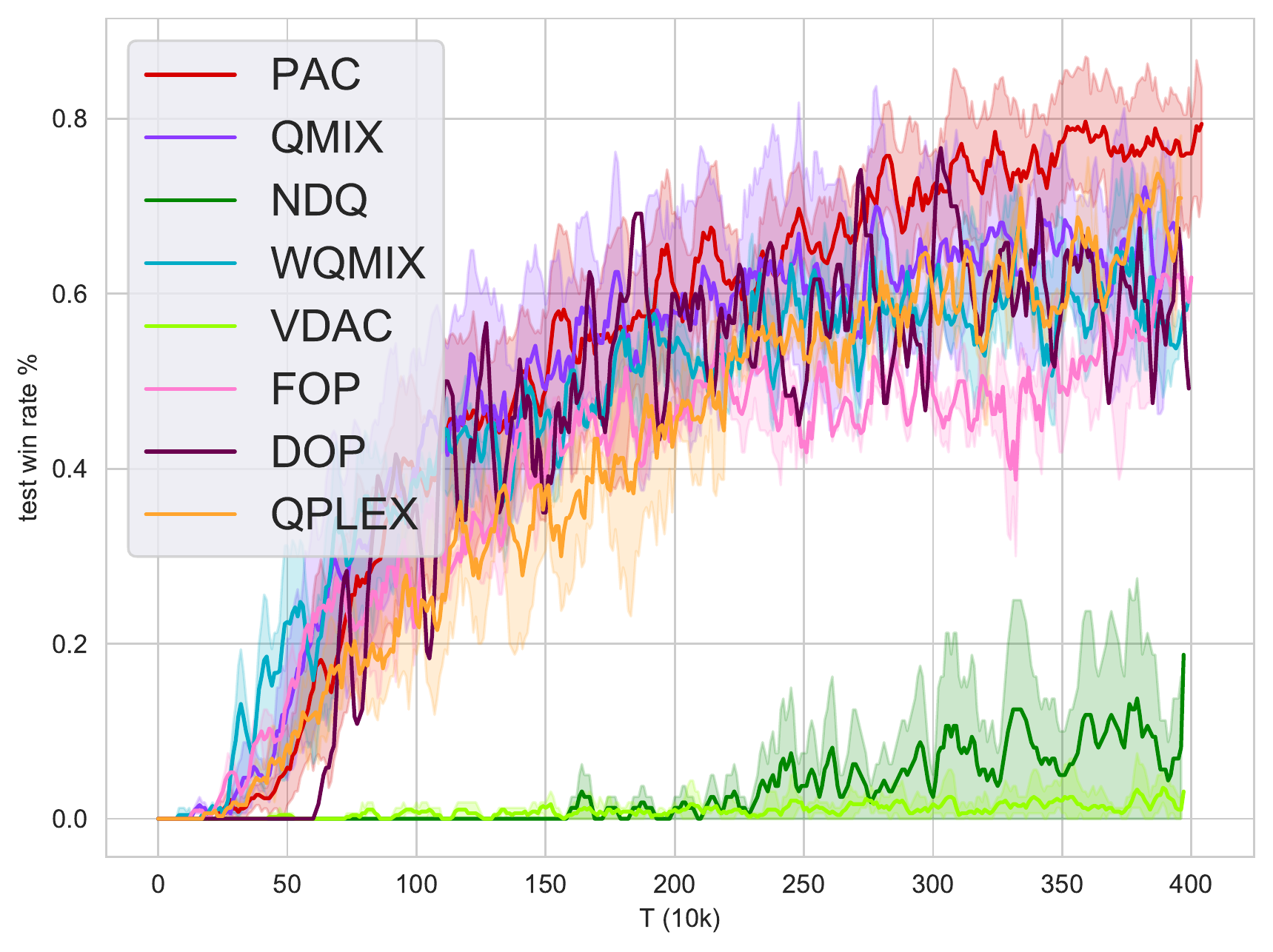}
         \caption{5m\_vs\_6m (hard)}
         \label{fig:one}
     \end{subfigure}
     \hfill
     \begin{subfigure}[b]{0.323\textwidth}
         \centering
         \includegraphics[width=\textwidth]{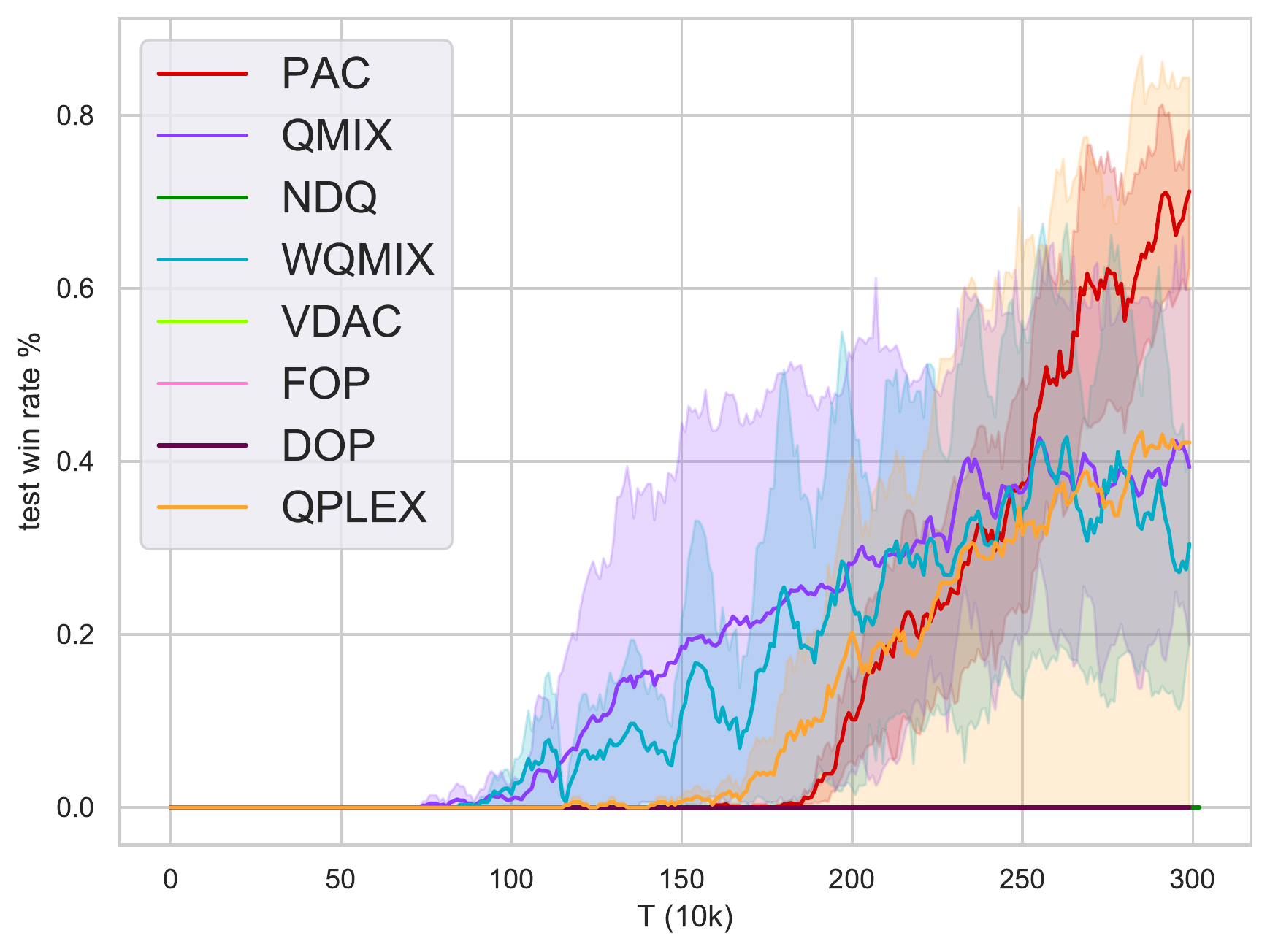}
        \caption{3s\_vs\_5z (hard)}
         \label{fig: two}
     \end{subfigure}
     \hfill
     \begin{subfigure}[b]{0.323\textwidth}
         \centering
         \includegraphics[width=\textwidth]{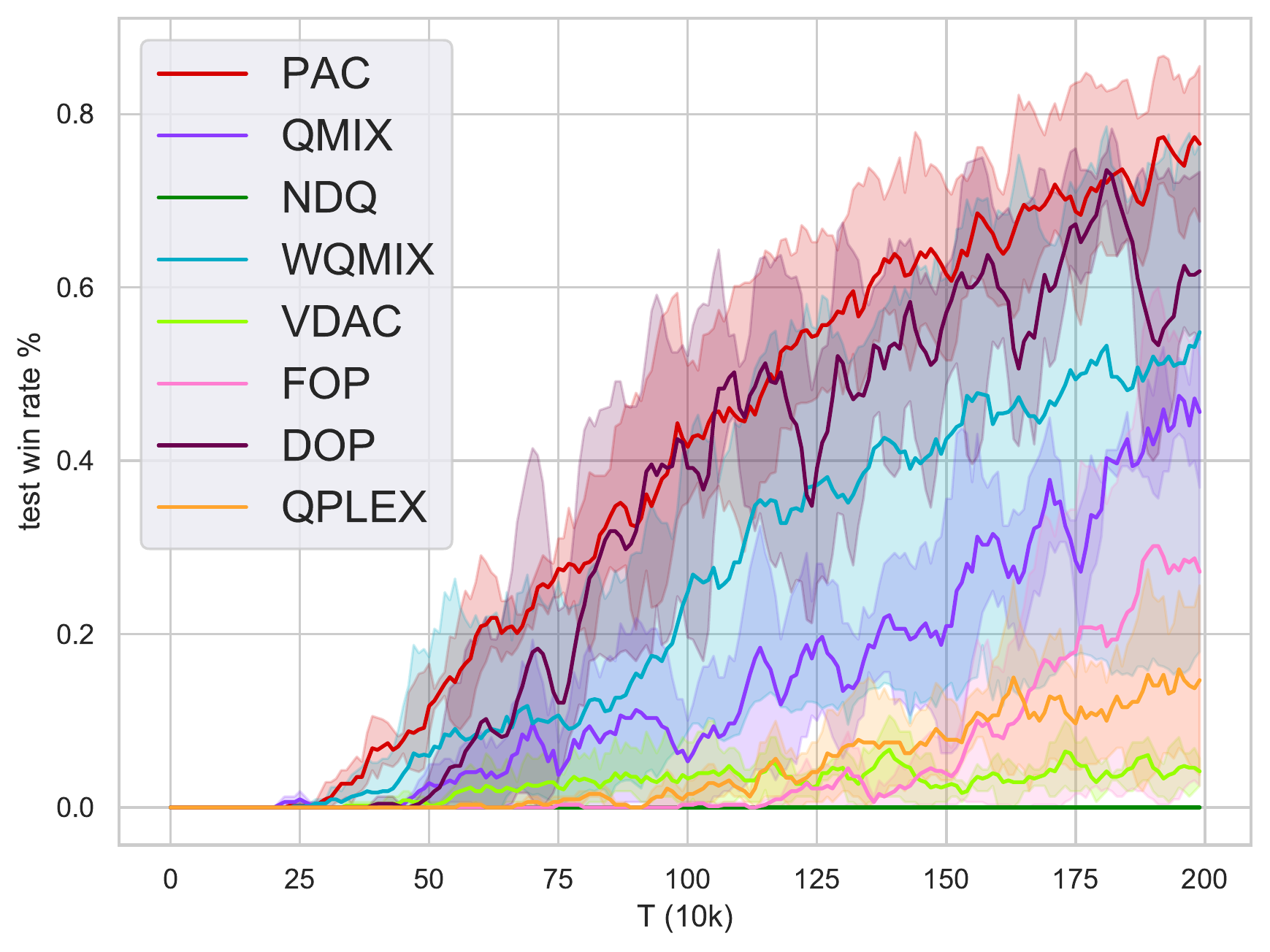}
        \caption{MMM2 (super hard)}
         \label{fig:tres}
     \end{subfigure}

        \label{fig:three graphs}
    \begin{subfigure}[b]{0.323\textwidth}
         \centering
         \includegraphics[width=\textwidth]{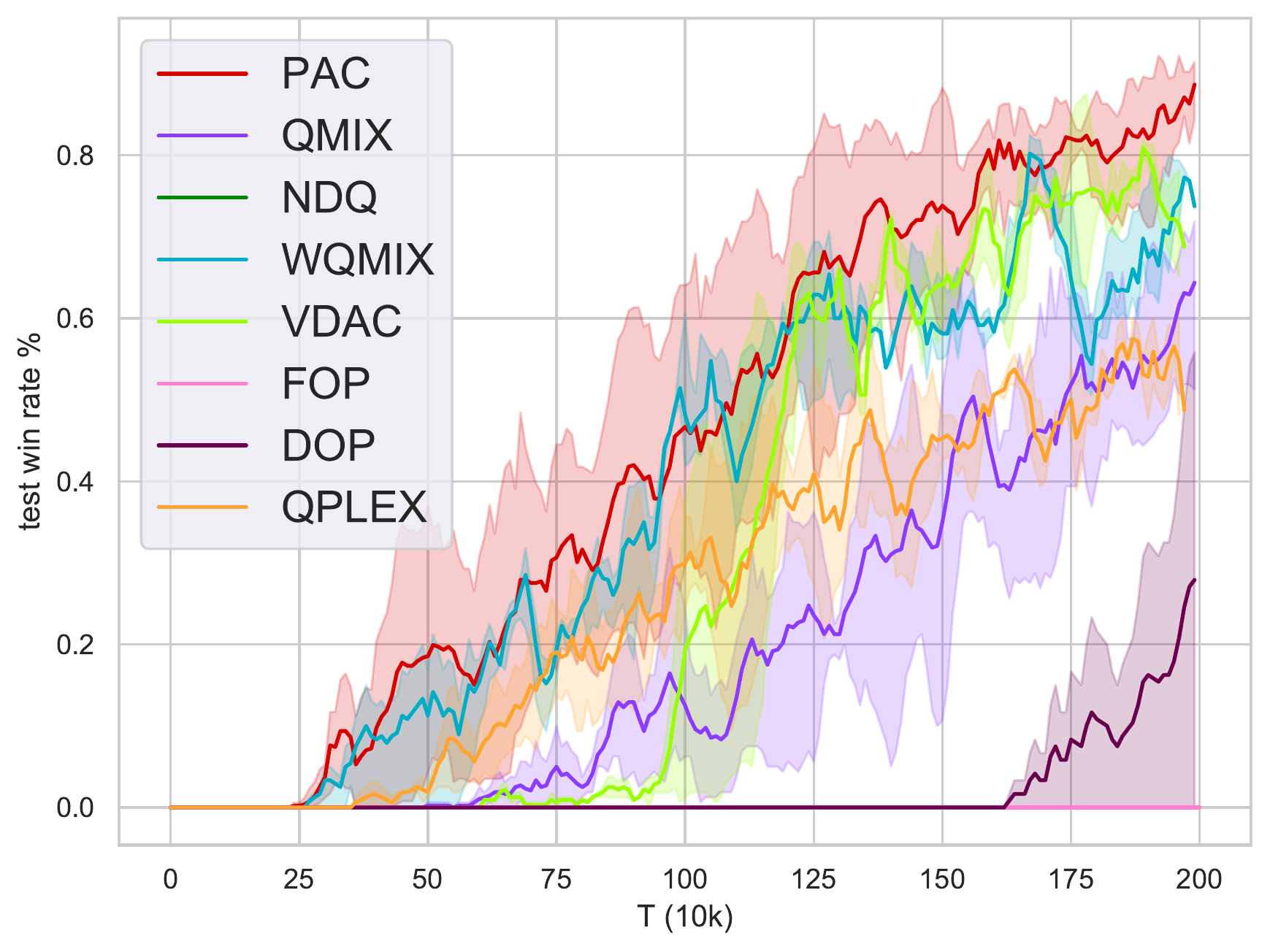}
        \caption{27m\_vs\_30m (super hard)}
         \label{fig:quatro}
     \end{subfigure}
    \begin{subfigure}[b]{0.323\textwidth}
         \centering
         \includegraphics[width=\textwidth]{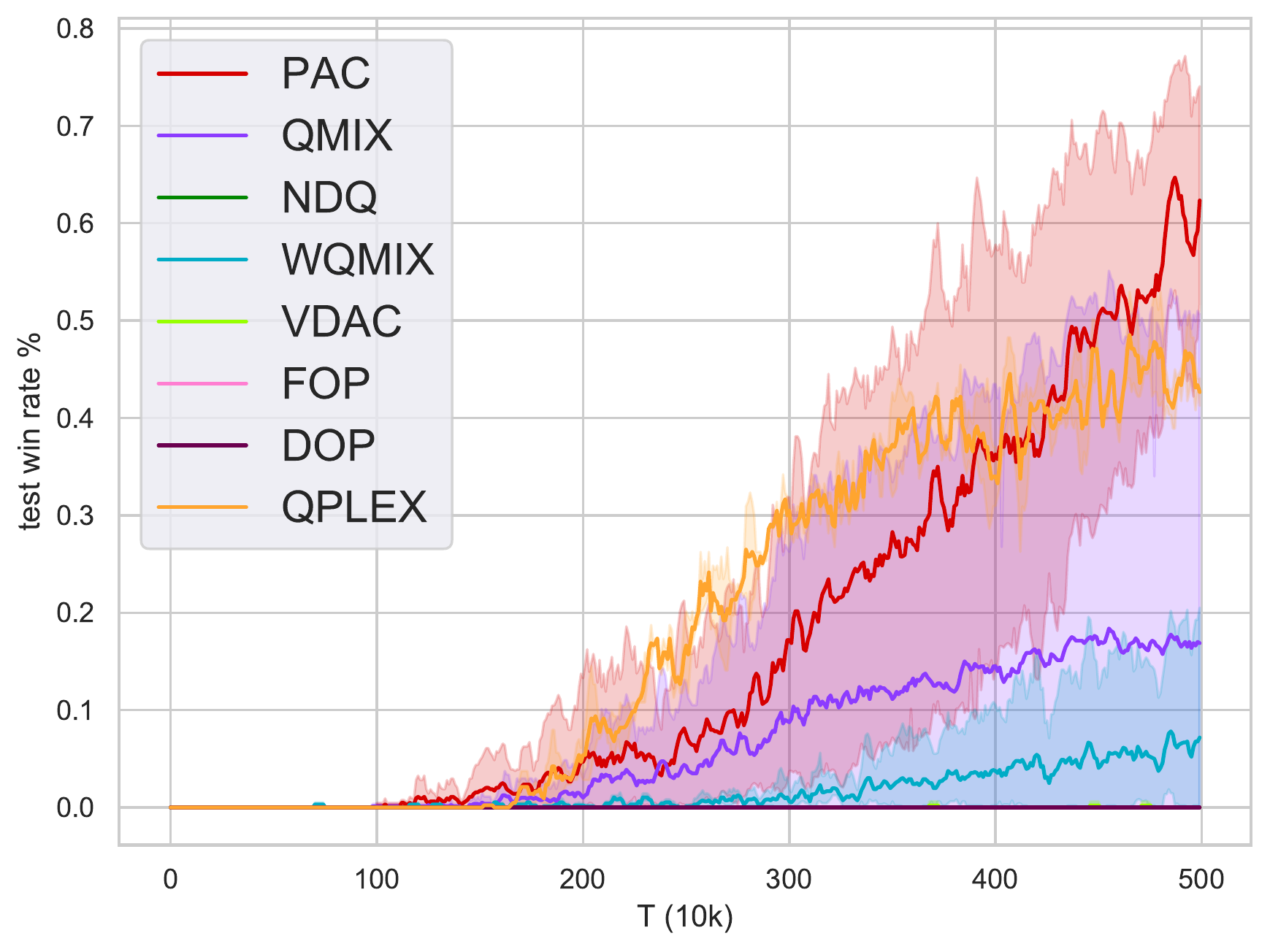}
       \caption{6h\_vs\_8z (super hard)}
         \label{fig:cinco}
     \end{subfigure}
    \begin{subfigure}[b]{0.323\textwidth}
         \centering
         \includegraphics[width=\textwidth]{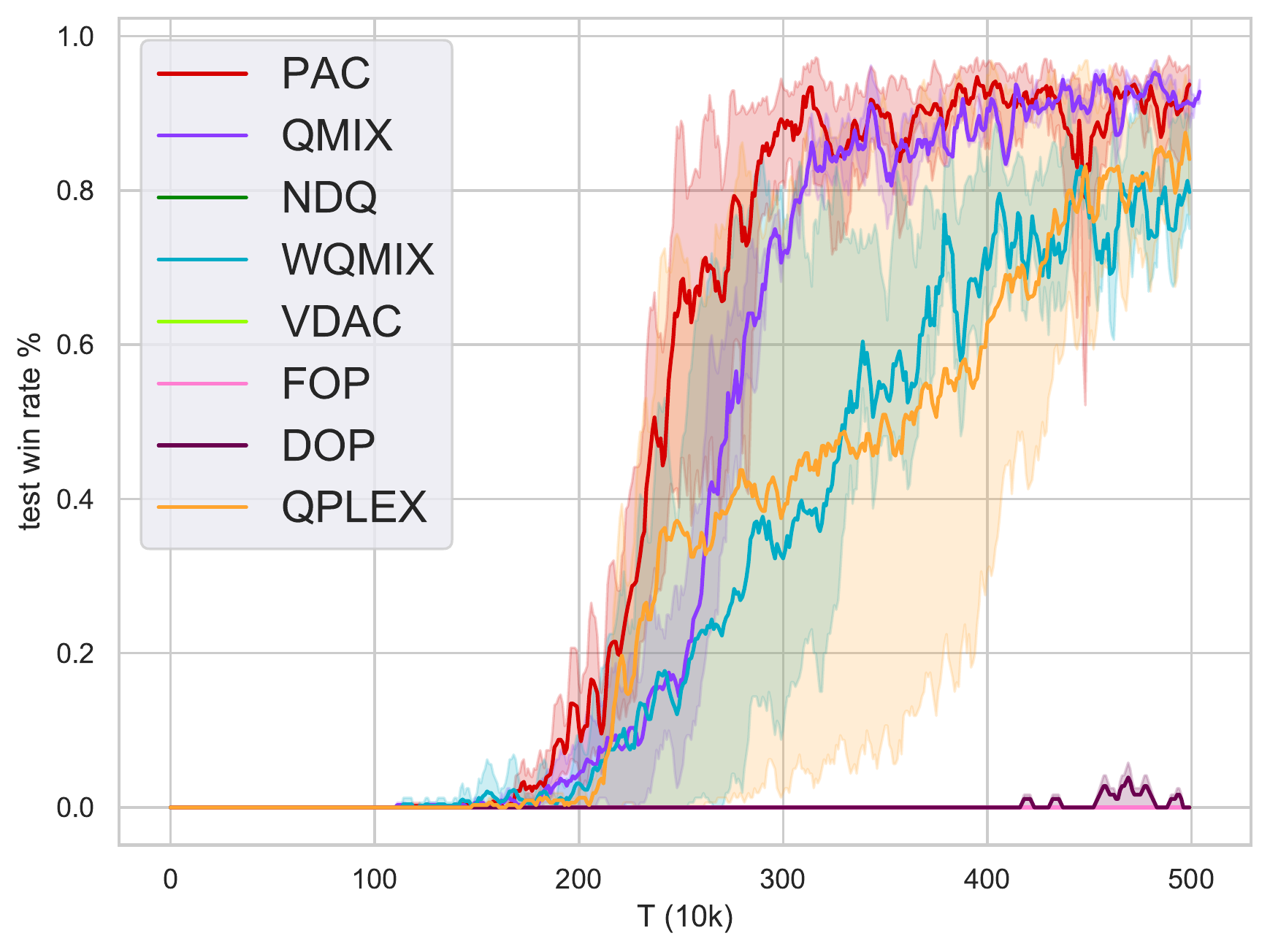}
        \caption{corridor (super hard)}
         \label{fig:sies}
     \end{subfigure}
    \caption{Results on SMAC benchmark}
\end{figure*}
We choose state-of-the-art MARL algorithms as baseline including decomposed actor critic method: FOP \cite{zhang2021fop} and DOP \cite{wang2020dop}, decomposed policy gradient method: VDAC \cite{su2020value}, decomposed value based method: QMIX \cite{rashid2018qmix}, QPLEX \cite{wang2020qplex} and WQMIX \cite{rashid2020weighted} \footnote{In this section we refer WQMIX to ow-qmix as it shows a general better performance than cw-qmix.} and a communication enabled value-based method: NDQ \cite{wang2019learning}. 
Results are shown in Fig.4. Overall, our method achieves the highest win rate compared to the baseline algorithms in terms of higher performance or faster convergence, especially on those maps that require more exploration and agents' coordination. As previous research demonstrated, there is a gap between SOTA value-based method and policy gradient methods, as the performance for most of them is limited on those maps that require extensive exploration techniques. 
Specifically, on \texttt{3s\_vs\_5z} and \texttt{6h\_vs\_8z}, PAC is able to train a usable policy that outperforms all baseline algorithms. On \texttt{corridor} PAC and the selected two value-based methods are able to learn a model, with our method converging faster with slightly better performance, while policy-based methods suffer from this map as it requires more exploration to find the specific trick in winning this challenging scenario. Finally on relatively easier maps, although most baseline algorithms hold a relatively close performance, the value-based and policy-based method performance gap still exists.
Although FOP recently claimed to be the first multi-agent actor-critic method that outperforms state-of-the-art value-based methods on SMAC, we empirically found its limits when the chosen environment is substantially complicated and harder.
We especially observe that our method as an actor-critic method has over-performed SOTA value-based MARL methods and brought significant improvements for actor-critic MARL. 
\subsection{Ablation Studies}
\setlength\intextsep{0pt}
\begin{wrapfigure}{r}{0pt}
\centering
\includegraphics[width=0.4\textwidth]{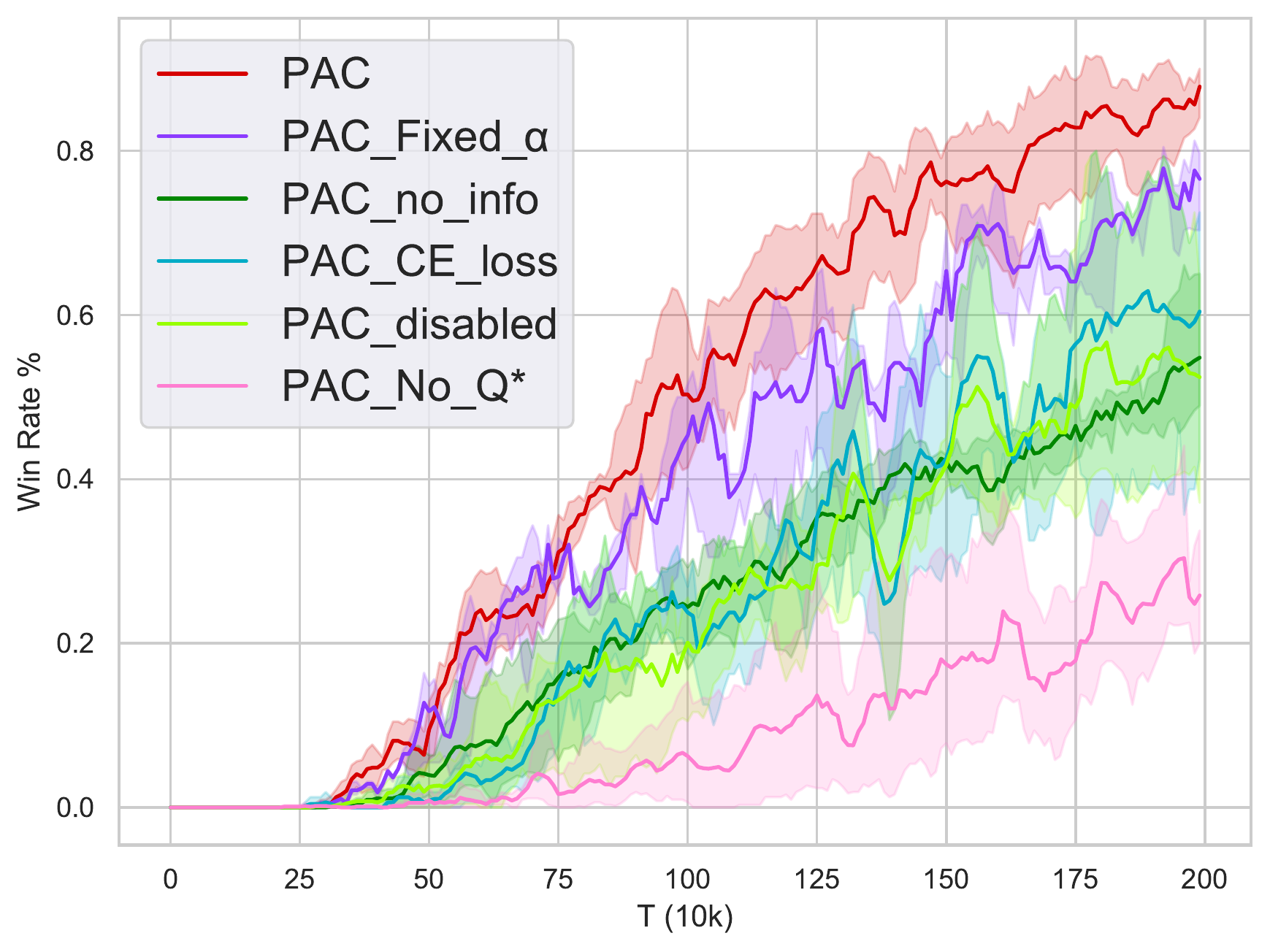}
\caption{Ablations results comparing PAC and its hyperparameter-tuned ablated versions on SMAC map \texttt{MMM2}}
\end{wrapfigure}

We conduct ablation experiments to validate the effectiveness and contribution of each core component introduced in PAC on \texttt{MMM2} scenario in SMAC. Namely, in Fig.5 we consider verifying the effect of (1) optimization of temperature term in policy $\mathcal{J}(\alpha)$ by fixing $\alpha= 0.5$ as \textit{PAC\_fixed}${\alpha}$, (2) assisted information loss $\mathcal{L}_{IB}$ by disable it as \textit{PAC\_no\_info},  (3) counterfactual assistance loss $\mathcal{L}_{CA}$ by replace it with a simple cross-entropy loss as \textit{PAC\_CE\_ Loss}, (4) disable both $\mathcal{L}_{IB}$ and $\mathcal{L}_{CA}$ while substituting all $\hat{u}^*$ with $\hat{u}$ as \textit{PAC\_disabled} and (5) further remove the $Q^*$ structure from \textit{PAC\_disabled} as \textit{PAC\_No\_$Q^*$}. 
In this way, overall we can observe PAC outperforms ablated versions, especially by a large margin compared to \textit{PAC\_disabled} which indicates that a general encoding of the state information without explicit direction for training is not helpful and may also harm the training.Although \textit{PAC\_CE\_ Loss} later acquires competitive results, it takes a longer time, potentially due to the cross-entropy loss in this case trapped the training in a local sub-optimum. By fixing the entropy term $\alpha$ at 0.5 seems to be a balance point for maximizing the reward while promoting the exploration, yet it brings a performance drop, indicating the importance of updating the entropy term with its loss previously shown. 
\textit{PAC\_No\_$Q^*$} suffers from the most significant performance drop after most core components are removed from the original design.
Such results validate how each component is crucial for achieving performance through experiments.




\subsection{Limitations}
Although the empirical results of PAC demonstrated improvements over both SOTA value-based and policy-based methods, at this stage, no strict convergence guarantee to the optimum is provided due to the scope of research.
We also observe a higher variance in terms of the performance of PAC than value-based methods, this might be because of the entropy-maximization penalty. 
Also, we follow the tradition of parameter sharing among the agents, thus the role assignment of formulating distinctive behaving agents, which is another important topic was not considered. 

\section{Related Works}
Cooperative multi-agent decision-making often suffers from exponential joint state and action spaces. Multiple approaches including independent Q-learning and mean-field games have been considered in the literature, while they do not perform well in challenging tasks or require homogeneous agents \cite{su2020value}. A paradigm of centralized training and decentralized execution (CTDE) has been proposed for scalable decision-making \cite{kraemer2016multi}. QPLEX \cite{wang2020qplex} takes a duplex dueling network architecture to factorize the joint value function. Some of the key CTDE approaches include value function decomposition and multi-agent policy gradient methods.

Policy Gradient methods are considered to have more stable convergence compared to value-based methods \cite{gupta2017cooperative,peng2021facmac,zhang2021fop} and can be extended to continuous action problems easily. A representative multi-agent policy gradient method is COMA \cite{foerster2018counterfactual}, which utilizes a centralized critic module for estimating the counterfactual advantage of an individual agent. DOP \cite{wang2020dop} uses factorized policy gradients with architecture similar to Qatten\cite{yang2020qatten}. However, as pointed out in \cite{papoudakis2020comparative,son2020qtran++}, multi-agent policy-based methods like MADDPG\cite{lowe2017multi} are still outperformed by value-based methods StarCraft multi-agent challenge (SMAC) \cite{samvelyan2019starcraft}.

Decomposed actor-critic methods, which combine value function decomposition and policy gradient methods with the use of decomposed critics rather than centralized critics, are introduced to guide policy gradients. VDAC \cite{su2020value} combined the structure of actor-critic and QMIX for the joint state-value function estimation, while DOP \cite{wang2020off} directly uses a network similar to Qatten \cite{yang2020qatten} for policy gradients with off-policy tree backup and on-policy TD. The authors of \cite{wang2020off} pointed out that decomposed critics are limited by restricted expressive capability and thus cannot guarantee the convergence of global optima; even though the individual policies may converge to local optima \cite{zhang2021fop}. Extensions of the monotonic mixing function have also been considered, e.g., QTRAN~\cite{son2019qtran} and weighted QMIX~\cite{rashid2020weighted}. But solving tasks that require significant coordination remains a key challenge.

Another related topic is representational learning in reinforcement learning. A VAE-based forward model is proposed in ~\cite{ha2018world} to learn the state representations in the environment. \cite{grover2018learning} considers a model to learn Gaussian embedding representations of different tasks during meta-testing. The authors in \cite{igl2018deep} proposed a recurrent VAE model which encodes the observation and action history and learns a variational distribution of the task. NDQ \cite{wang2019learning} encodes the state information as communication messages between agents. \cite{papoudakis2020variational} use an inference model to represent the decision-making of the opponents. RODE \cite{wang2020rode} uses an action encoder to learn the action representations in restricting the role action spaces for a reduced policy search space. MAR \cite{zhang2021learning} learns the metarepresentation for generalization problems. Unlike previous work, our method focuses on learning information from counterfactual predictions that are explicitly assistive for local estimation and efficient factorization.






\section{Conclusions}

In this paper, we propose PAC, a multi-agent framework utilizing extra state information as assistance for a better value function factorization, under a centralized but factored soft-actor critic setting. With the newly proposed counterfactual optimal joint action selection used for training and encoding assistance information, empirically we show our method not only matches or outperforms both state-of-the-art policy-based and value-based MARL algorithms on selected benchmarks but also bridges the gap between the two. 
Future work will also explore more effective ways to formulate and utilize extra state information to accelerate the training and tackle tasks in more complicated environments.
%


{
\small
\bibliography{neuripsbib}
\bibliographystyle{unsrtnat} 
}

%


\appendix

\section{Appendix}
\subsection{Mathematical Details}

\subsubsection{Boundaries for counterfactual-prediction information}
Following the introduction of the variational information bottleneck, we explain it as follows.
Consider a Markov chain of  $o - \hat{u*}-m$,  (substituting the $X-Y-Z$ in the original IB and VIB), regarding the hidden representation of encoding $\hat{u*}$ of the input o, the goal of learning an encoding is to maximize the information about target m measured by the mutual information between encoding and the target $I(\hat{u*};m)$. 

To prevent the encoding of data from being $m = \hat{u*}$, which is not a useful representation, a constraint on the complexity can be applied to the mutual information as $I( \hat{u*}; m) \leq I_{c}$, where $I_{c}$ is the information constraint. This is equivalent to using the Lagrange multiplier $\beta$ to maximize the objective function $ I( \hat{u*}; m) - \beta  I( \hat{u*}; o)$. Intuitively, as the first term is to encourage $m$ to be predictive of $\hat{u*}$ while the second term is to encourage $m$ to forget $o$. Essentially $m$ is to act like a minimal sufficient statistic of $o$ for predicting $\hat{u*}$ \cite{tishby2000information}.

Then specifically for each agent $i$, we intend to encourage assistive information$m_{-j}$  from other agents to agent $j$ to memorize its $\hat{u_j*}$ when assistive information from agent $i$ is conditioned on observation $o_{j}$, 
while we encourage assistive information $m_i$ from agent $i$ to not depend directly on its own observation $o_i$. 

To efficiently and effectively encode extra state information for individual value estimation, we consider this information encoding problem as an information bottleneck problem \cite{tishby2000information}, the objective for each agent $i$ can be written as:
\begin{equation}
J_{IB}\left(\boldsymbol{\theta}_{m }\right)=\sum_{j=1}^{n}\left[I_{{\theta}_{m}}\left(\hat{u}_j^* ; m_{i} | \mathrm{o}_{j}, m_{-j}\right)-\beta I_{\boldsymbol{\theta}_{m}}\left(m_{i};o_{i}\right)\right]
\end{equation}

This object is appealing because it defines what is a good representation in terms of trade-off between a succinct representation and inferencing ability. The main shortcoming is that the computation of the mutual information is computationally challenging. Inspired by the recent advancement in Bayesian inference and variational auto-encoder \cite{kingma2013auto,papoudakis2020variational, wang2019learning}, we propose a novel way of representing it by utilizing latent vectors from variational inference models using information theoretical regularization method, and then derive the evidence lower bound (ELBO) of its objective.

\begin{lemma}
A lower bound of mutual information $I_{{\theta}_{m}}\left(\hat{u}_j^* ; m_{i} | \mathrm{o}_{j}, m_{-j}\right)$ is  
\[
\mathbb{E}_{{o_{i}} \sim \mathcal{D}, m_{j} \sim f_{m}} [-\mathcal{H}(p(\hat{u}_j^* | \mathbf{o}), q_{\psi }(\hat{u}_j^* | \mathrm{o}_{j}, \boldsymbol{m}))]
\]
where $q_{\psi}$ is a variational Gaussian distribution with parameters $\psi$ to approximate the unknown posterior $p(\hat{u}_j^* | \mathrm{o}_{j}, m_{j})$, $\mathbf{o}=\{o_{1},o_{2}, \cdots, o_{n}\}$, $\mathbf{m}=\{m_{1},m_{2}, \cdots, m_{n}\}$. 
\end{lemma}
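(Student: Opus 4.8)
The plan is to follow the standard variational information bottleneck argument, specialised to the conditional mutual information in the statement. First I would expand the target using the definition of conditional mutual information. Since $\mathbf{m}=(m_i,m_{-j})$, conditioning jointly on $m_i$, $o_j$ and $m_{-j}$ is the same as conditioning on $o_j$ and $\mathbf{m}$, so that
\[
I_{\theta_m}\!\left(\hat{u}_j^*; m_i \mid o_j, m_{-j}\right) = \mathbb{E}\!\left[\log p(\hat{u}_j^* \mid o_j, \mathbf{m})\right] - \mathbb{E}\!\left[\log p(\hat{u}_j^* \mid o_j, m_{-j})\right],
\]
where both expectations are taken under the joint law of $(\hat{u}_j^*, o_j, \mathbf{m})$ obtained by drawing $\mathbf{o}\sim\mathcal{D}$, the target $\hat{u}_j^*\sim p(\cdot\mid\mathbf{o})$, and the encodings $\mathbf{m}\sim f_m(\cdot\mid\mathbf{o})$ with $m_i\sim\boldsymbol{N}(f_m(o_i;\theta_m),\boldsymbol{I})$.

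The second term equals a conditional entropy, $-\mathbb{E}[\log p(\hat{u}_j^* \mid o_j, m_{-j})] = \mathcal{H}(\hat{u}_j^* \mid o_j, m_{-j})$, which is non-negative because $\hat{u}_j^*$ is a discrete action selection; I would therefore drop it to obtain a valid lower bound. For the first, intractable term I would introduce the variational Gaussian approximation $q_\psi(\hat{u}_j^* \mid o_j, \mathbf{m})$ of the unknown posterior $p(\hat{u}_j^* \mid o_j, \mathbf{m})$ and invoke non-negativity of the Kullback--Leibler divergence, $D_{\mathrm{KL}}\!\left(p(\cdot \mid o_j, \mathbf{m}) \,\|\, q_\psi(\cdot \mid o_j, \mathbf{m})\right)\ge 0$, which gives $\mathbb{E}[\log p(\hat{u}_j^* \mid o_j, \mathbf{m})] \ge \mathbb{E}[\log q_\psi(\hat{u}_j^* \mid o_j, \mathbf{m})]$. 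Chaining the two steps yields $I_{\theta_m}(\hat{u}_j^*; m_i \mid o_j, m_{-j}) \ge \mathbb{E}[\log q_\psi(\hat{u}_j^* \mid o_j, \mathbf{m})]$.

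Finally I would recast the surviving expectation into the claimed cross-entropy form. Because both $\hat{u}_j^*$ and $\mathbf{m}$ are generated from $\mathbf{o}$ through the Markov/generative structure $\mathbf{o}\to\hat{u}^*\to\mathbf{m}$, the outer expectation over $(\mathbf{o},\mathbf{m})$ factors out from the inner expectation over $\hat{u}_j^*\sim p(\cdot\mid\mathbf{o})$, and $-\mathbb{E}_{\hat{u}_j^*\sim p(\cdot\mid\mathbf{o})}[\log q_\psi(\hat{u}_j^* \mid o_j, \mathbf{m})]$ is exactly the cross-entropy $\mathcal{H}\!\left(p(\hat{u}_j^* \mid \mathbf{o}),\, q_\psi(\hat{u}_j^* \mid o_j, \mathbf{m})\right)$, delivering the stated bound. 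The main obstacle I anticipate is bookkeeping rather than conceptual: keeping the conditioning sets and sampling measures consistent throughout, and in particular justifying via the generative structure that the distribution appearing inside the cross-entropy is the target law $p(\hat{u}_j^* \mid \mathbf{o})$ rather than $p(\hat{u}_j^* \mid o_j, \mathbf{m})$, while confirming that the discarded conditional-entropy term is genuinely non-negative over the discrete action set.
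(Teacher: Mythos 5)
Your proposal is correct and follows essentially the same route as the paper's proof: both decompose the conditional mutual information into an intractable log-posterior term plus a conditional entropy term, bound the former via non-negativity of the KL divergence to the variational decoder $q_{\psi}$, drop the non-negative entropy term, and use the generative factorization $p(\hat{u}_j^*,o_j,\mathbf{m}) = p(\mathbf{o})\,p(\mathbf{m}\mid\mathbf{o})\,p(\hat{u}_j^*\mid\mathbf{o})$ to rewrite the surviving expectation as the stated cross-entropy. The only difference is the order of the two bounding steps (you discard the entropy term before applying the variational bound, the paper does it after), and you are in fact slightly more careful than the paper in justifying why the discarded term is non-negative (discreteness of the action set).
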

\begin{proof} 
\begin{equation*}
\begin{aligned}
& I_{{\theta}_{m}}\left(\hat{u}_j^* ; m_{i} | \mathrm{o}_{j}, m_{-j}\right) \\
=& \int d \hat{u}_j^* d o_j d m_{-j} p\left(\hat{u}_j^*, o_j, m_{-j}\right) \log \frac{p\left(\hat{u}_j^*, m_{i} | o_j, m_{-j}\right)}{p\left(\hat{u}_j^* | o_j, m_{-j}\right) p\left(m_{i} | o_j, m_{-j}\right)}   \\
=& \int d \hat{u}_j^* d o_j d m_{-j} p\left(\hat{u}_j^*, o_j, m_{-j}\right) \log \frac{p\left(\hat{u}_j^* | o_j, m_{-j}\right)}{p\left(\hat{u}_j^* | o_j, m_{-j}\right)} \\
\end{aligned}
\end{equation*}
where $p(\hat{u}_j^* | o_j, m_{-j})$ is fully defined by our encoder and Markov Chain. Since this is intractable in our case, let $q_{\psi}(\hat{u}_j^* | \mathrm{\tau}_{j}, m_{-j})$ be a variational approximation to $p(\hat{u}_j^*| o_j, m_{-j})$, where this is our decoder which we will take to another neural network with its own set of parameters $\psi$. Using the fact that Kullback Leibler divergence is always positive, we have  
$$
KL[p(\hat{u}_j^* | o_j, m_{-j}), q_{\psi}(\hat{u}_j^* | \mathrm{\tau}_{j}, m_{-j})] \geq 0
$$
\begin{equation*}
\begin{aligned}
      \\
    \int d \hat{u}_j^* d o_j d m_{-j} p(\hat{u}_j^*, o_j, m_{-j}) \log p(\hat{u}_j^* | o_j, m_{-j}) \geq
    \int d \hat{u}_j^* d o_j d m_{-j} p(\hat{u}_j^*, o_j, m_{-j}) \log q_{\psi}(\hat{u}_j^* | o_j, m_{-j})
\end{aligned}
\end{equation*}
and hence 

\begin{equation*}
\begin{aligned}
& I_{\theta_{c}}\left(\hat{u}_j^* ; m_{i} | o_{j}, m_{j}\right) \\
\geq & \int d \hat{u}_j^* d o_j d m_{j} p\left(\hat{u}_j^*, o_j, m_{j}\right) \log \frac{q_{\psi}\left(\hat{u}_j^* | o_j, m_{j}\right)}{p\left(\hat{u}_j^* | o_j, m_{-j}\right)} \\
= & \int d \hat{u}_j^* d o_j d m_{j} p\left(\hat{u}_j^*, o_j, m_{j}\right) \log q_{\psi}\left(\hat{u}_j^* | o_j, m_{j}\right) - 
\int d \hat{u}_j^* d o_j d m_{j} p\left(\hat{u}_j^*, o_j, m_{j}\right) \log p\left(\hat{u}_j^* | o_j, m_{-j}\right) \\
= & \int d \hat{u}_j^* d o_j d m_{j} p(o_j) p(m_{j}|o_j) p (\hat{u}_j^*|o_j)\log q_{\psi}(\hat{u}_j^* | o_j, m_{j}) +  
\mathcal{H}(\hat{u}_j^* | o_{j}, m_{j})\\
= & \mathbb{E}_{\mathbf{o} \sim \mathcal{D}, m_{j} \sim f_{m}} (\int d \hat{u}_j^* p(\hat{u}_j^* | \mathbf{o}) \log q_{\psi}(\hat{u}_j^* | o_j, m_{j}) ) + 
\mathcal{H}(\hat{u}_j^* | o_{j}, m_{j})\\
= &  \mathbb{E}_{\mathbf{o} \sim \mathcal{D}, m_{j} \sim f_{m}} [-\mathcal{H}[p(\hat{u}_j^* | \mathbf{o}), q_{\psi }(\hat{u}_j^* | \mathrm{o}_{j}, \boldsymbol{m})]] + 
\mathcal{H}(\hat{u}_j^* | o_{j}, m_{j})
\end{aligned}
\end{equation*}
Notice that the entropy of labels $\mathcal{H}(\hat{u}_j^* | o_{j}, m_{j})$ is an positive term that is independent of our optimization procedure and thus can be ignored. Then we have
\begin{equation*}
\begin{aligned}
I_{{\theta}_{m}}(\hat{u}_j^* ; m_{i} | \mathrm{o}_{j}, m_{j})
\geq  \mathbb{E}_{\mathbf{o} \sim \mathcal{D}, m_{j} \sim f_{m}} [-\mathcal{H}[p(\hat{u}_j^* | \mathbf{o}), q_{\psi }(\hat{u}_j^* | \mathrm{o}_{j}, \boldsymbol{m})]]
\end{aligned}
\end{equation*}
which is the lower bound of the first term in Eq.(2)
\end{proof}

\begin{lemma}
A lower bound of mutual information $I_{\boldsymbol{\theta}_{m}}\left(m_{i};o_{i}\right)$ is  
\[
 \mathbb{E}_{\mathrm{T}_{i} \sim D,m_{j} \sim f_{m}}[\beta D_{\mathrm{KL}}(p(m_{i} | \mathrm{o}_{i}) \| q_{\phi}(m_{i}))]
\]
where $D_{\mathrm{KL}}$ denotes Kullback-Leibler divergence operator and $q_{\phi}( m_{i})$ is a variational posterior estimator of $p(m_{i})$ with parameters $\phi$.
\end{lemma}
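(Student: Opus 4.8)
The plan is to establish Lemma 2 by the same variational route used for Lemma 1, now applied to the compression term, with the only tractable object being a variational prior $q_{\phi}(m_i)$ that stands in for the intractable aggregate marginal. First I would rewrite the mutual information in its divergence form, $I_{\theta_m}(m_i;o_i) = \mathbb{E}_{o_i \sim \mathcal{D}}\!\left[D_{\mathrm{KL}}\!\left(p(m_i\mid o_i)\,\|\,p(m_i)\right)\right]$, where $p(m_i)=\int p(o_i)\,p(m_i\mid o_i)\,do_i$ is the marginal induced by the Gaussian encoder $m_i\sim\mathcal{N}(f_{m}(o_i;\theta_m),\boldsymbol{I})$. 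The object that forces a variational treatment is exactly this marginal, which is an intractable high-dimensional mixture, so I would introduce $q_{\phi}(m_i)$ as its surrogate, mirroring the role the decoder $q_{\psi}$ played in Lemma 1.

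The key step is a single exact decomposition that directly relates the two quantities named in the lemma. Adding and subtracting $\log q_{\phi}(m_i)$ inside the expectation yields
\[
\mathbb{E}_{o_i\sim\mathcal{D}}\!\left[D_{\mathrm{KL}}\!\left(p(m_i\mid o_i)\,\|\,q_{\phi}(m_i)\right)\right] = I_{\theta_m}(m_i;o_i) + D_{\mathrm{KL}}\!\left(p(m_i)\,\|\,q_{\phi}(m_i)\right),
\]
so the variational KL expression and $I_{\theta_m}(m_i;o_i)$ differ by precisely the residual marginal divergence $D_{\mathrm{KL}}(p(m_i)\|q_{\phi}(m_i))$. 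Invoking Gibbs' inequality (non-negativity of KL) on this residual closes the relation between the two quantities, and substituting the Gaussian $p(m_i\mid o_i)=\mathcal{N}(f_{m}(o_i;\theta_m),\boldsymbol{I})$ together with $q_{\phi}(m_i)$ turns the bound into a closed-form, differentiable expression compatible with the reparameterization trick, which is what makes $\mathcal{L}_{IB}$ in Eq. (3) trainable end to end.

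The hard part — and the step I would flag most explicitly — is the orientation of this inequality. The variational-marginal substitution intrinsically makes the residual $D_{\mathrm{KL}}(p(m_i)\|q_{\phi}(m_i))\ge 0$ dominate, so the clean relation between the two quantities is $\mathbb{E}_{o_i}[D_{\mathrm{KL}}(p(m_i\mid o_i)\|q_{\phi}(m_i))] \ge I_{\theta_m}(m_i;o_i)$, i.e. the KL expression controls the mutual information from above rather than from below. Reconciling this with the lemma as worded hinges entirely on the sign with which the term enters $J_{IB}$: because $I_{\theta_m}(m_i;o_i)$ is weighted by $-\beta\le 0$, bounding it from above is exactly what is needed to lower-bound the penalized contribution $-\beta\,I_{\theta_m}(m_i;o_i)$, and combined with the Lemma 1 lower bound on the informativeness term it produces a tractable lower bound on the full $J_{IB}$ that PAC maximizes. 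I would therefore carry the Gibbs step and the $-\beta$ convention through together, so that the assembled $\mathcal{L}_{IB}$ is oriented consistently even though the bound on $I_{\theta_m}(m_i;o_i)$ in isolation is, strictly, an upper bound.
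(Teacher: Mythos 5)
Your proposal is correct and follows essentially the same route as the paper: both introduce the variational marginal $q_{\phi}(m_i)$ and invoke non-negativity of $D_{\mathrm{KL}}\left(p(m_i)\,\|\,q_{\phi}(m_i)\right)$ --- your add-and-subtract exact decomposition is just a cleaner packaging of the paper's Gibbs-inequality step, since the residual you drop is precisely the KL term the paper bounds. Your flagged concern about orientation also agrees with the paper's own derivation, which concludes $I_{\theta_m}(m_i;o_i) \leq \mathbb{E}\left[D_{\mathrm{KL}}\left(p(m_i\mid o_i)\,\|\,q_{\phi}(m_i)\right)\right]$, i.e.\ an \emph{upper} bound on the mutual information, so the lemma's ``lower bound'' wording is defensible only in the sense you give: after the $-\beta$ weighting, the surrogate lower-bounds the penalized term in $J_{IB}$.
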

\begin{proof} 
\begin{equation*}
\begin{aligned}
& I_{\boldsymbol{\theta}_{m}}(M_{i};T_{i})\\
& = \int d m_{i} d o_i p(m_{i}|o_i)p(o_i) \log \frac{p(m_{i}|o_i)}{p(m_{i})} \\
& = \int d m_{i} d o_i p(m_{i}|o_i)p(o_i) \log p(m_{i}|o_i) - \int d m_{i} d o_i p(m_{i}|o_i)p(o_i) \log p(m_{i})
\end{aligned}
\end{equation*}
Again, $p(m_{i})$ is fully defined by our encoder and Markov Chain, and when it is fully defined, computing the marginal distribution $\int d o_i p(m_{i} | o_i) P(o_i) $ might be difficult. So we use $q_{\phi}(m_{i})$ as a variational approximation to this marginal. Since  $ KL[p(m_{i}), q_{\phi}(m_{i})] \geq 0 $,

We have 
$$
\int d m_{i} p(m_{i}) \log p(m_{i}) \geq \int d m_{i} p(m_{i}) \log q_{\phi}(m_{i})
$$

Then 
\begin{equation*}
\begin{aligned}
& I_{\boldsymbol{\theta}_{m}}(M_{i};T_{i})\\
& \leq \int d m_{i} d o_i p(m_{i}|o_i)p(o_i) \log p(m_{i}|o_i) - \int d m_{i} d o_i p(m_{i}|o_i)p(o_i) \log q_{\phi}(m_{i}) \\
& = \int d m_{i} d o_i p(m_{i}|o_i)p(o_i) \log \frac{p(m_{i}|o_i)}{q_{\phi}(m_{i})} \\
& = \mathbb{E}_{\mathrm{o}_{i} \sim D,m_{j} \sim f_{m}}\left[D_{\mathrm{KL}}\left(p\left(m_{i}|\mathrm{o}_{i}\right) \| q_{\phi}\left(m_{i}\right)\right)\right]
\end{aligned}
\end{equation*}
\end{proof}

Combining \textbf{Lemma 1} and \textbf{Lemma 2}, we have the ELBO for the message encoding objective, which is to minimize 

\begin{equation}\mathcal{L}_{IB}\left(\boldsymbol{\theta}_{m}\right) = 
\mathbb{E}_{{o_{i}} \sim \mathcal{D}, m_{j} \sim f_{m}} [-\mathcal{H}[p(\hat{u}_j^* | \mathbf{o}), q_{\psi }(\hat{u}_j^* | \mathrm{o}_{j}, \boldsymbol{m})]+\beta D_{\mathrm{KL}}(p(m_{i} | \mathrm{o}_{i}) \| q_{\phi}(m_{i}))]
\end{equation}

\subsubsection{Factorized soft policy iteration}
Recent works have shown that Boltzmann exploration policy iteration is guaranteed to improve the policy and converge to optimal with unlimited iterations and full policy evaluation, within MARL domain it can be defined as:

$$
J(\pi)=\sum_{t} \mathbb{E}\left[r\left(\mathbf{s}_{t}, \mathbf{u}_{t}\right)+\alpha \mathcal{H}\left(\pi\left(\cdot | \mathbf{s}_{t}\right)\right)\right]
$$

In section 4 we give the factorized soft policy gradients as:

\begin{equation}
\begin{array}{l}
\begin{aligned}
\mathcal{L}_{LP}(\pi) &=\mathbb{E}_{\mathcal{D}}\left[\alpha \log \boldsymbol{\pi}\left(\boldsymbol{u}_{t} | \boldsymbol{\tau}_{t}\right)-Q^{\pi}_{tot}\left(\boldsymbol{s_{t}}, \boldsymbol{\tau_{t}}, \boldsymbol{u}_{t}, \boldsymbol{m}_{t}\right)\right] \\
&= -q^{\operatorname{mix}}\left(\boldsymbol{s}_{t}, \mathbb{E}_{\pi^{i}}\left[q^{i}\left(\tau_{t}^{i}, u_{t}^{i}, m_{t}^{i}\right)-\alpha \log \pi^{i}\left(u_{t}^{i} | \tau_{t}^{i}\right)\right]\right)
\end{aligned}
\end{array}
\end{equation}

We now derive it in detail, we use the aristocrat utility to perform credit assignment:

Let  $q^{\operatorname{mix}}$ be the operator of a one-layer mixing network with no activation functions in the end whose parameters are generated from the hyper-network with input $\boldsymbol{s}_{t}$, then
\begin{equation*}
\begin{aligned}
& q^{\operatorname{mix}}(\boldsymbol{s}_{t}, \boldsymbol{q}(\tau_{t}, a_{t},m_{t}) - \boldsymbol{\alpha} \log \boldsymbol{\pi(a_{t}| \tau_{t}})) 
\\  \\
& = \sum_{i}[ k^{i}(\boldsymbol{s}) \mathbb{E}_{\pi} [q^{i}(\tau_{t}^{i}, a_{t}^{i},m_{t}^{i})]  - \sum_{i}[ k^{i}(\boldsymbol{s})\alpha^{i} \log {\pi^{i}}({a}_{t} |{\tau}_{t})] + b(\boldsymbol{s}) \\
& \text{\hspace{5em}( $ k^{i}(s)$ and $b^{i}(s)$ are the corresponding weights and biases of $q^{\operatorname{mix}}$ conditioned on $\boldsymbol{s}$)} \\
& = \mathbb{E}_{\pi} [Q_{tot}(\boldsymbol{\tau}, \boldsymbol{a}, \boldsymbol{m} ; \boldsymbol{\theta})] - \sum_{i}[ k^{i}(\boldsymbol{s})\mathbb{E}_{\pi} [\alpha^{i} \log {\pi^{i}}({a}_{t} |{\tau}_{t})] \\
& \text{\hspace{5em}($q^{\operatorname{mix}}(\boldsymbol{s}_{t}, \boldsymbol{q}(\tau_{t}, a_{t},m_{t})) = \sum_{i}[ k^{i}(\boldsymbol{s}) \mathbb{E}_{\pi} [q^{i}(\tau_{t}^{i}, a_{t}^{i},m_{t}^{i})] + b(\boldsymbol{s}) = \mathbb{E}_{\pi} [Q_{tot}(\boldsymbol{\tau}, \boldsymbol{a}, \boldsymbol{m} ; \boldsymbol{\theta})] $ )} \\
& \text{\hspace{5em}(  $\mathbb{E}_{\pi} [Q_{tot}(\boldsymbol{\tau}, \boldsymbol{a}, \boldsymbol{m} ; \boldsymbol{\theta})] = \sum_{\boldsymbol{a}} {\pi^{i}}({a}_{t}^{i} |{\tau}_{t}^{i}) \mathbb{E}_{\pi} [Q_{tot}(\boldsymbol{\tau}, \boldsymbol{a}, \boldsymbol{m} ; \boldsymbol{\theta})]$  )} \\
& = \mathbb{E}_{\pi} [Q_{tot}(\boldsymbol{\tau}, \boldsymbol{a}, \boldsymbol{m} ; \boldsymbol{\theta})] -\sum_{i} \mathbb{E}_{\pi} [\boldsymbol{\alpha} \log {\pi^{i}}({a}_{t}^{i} |{\tau}_{t}^{i})] \\
& \text{\hspace{5em}(let $\alpha^{i} =  \frac{\boldsymbol\alpha}{k^{i}(\boldsymbol{s})}$)} \\
& = \mathbb{E}_{\pi} [Q_{tot}(\boldsymbol{\tau}, \boldsymbol{a}, \boldsymbol{m} ; \boldsymbol{\theta})] - \sum_{i}\sum_{\pi} [\boldsymbol{\alpha}\pi^{i} ({a}_{t}^{i} |{\tau}_{t}^{i})\log {\pi^{i}}({a}_{t}^{i} |{\tau}_{t}^{i})] \\ 
& = \mathbb{E}_{\pi} [Q_{tot}(\boldsymbol{\tau}, \boldsymbol{a}, \boldsymbol{m} ; \boldsymbol{\theta})] - \sum_{\pi} \boldsymbol{\alpha} \log \boldsymbol{\pi}(\boldsymbol{a}_{t} |\boldsymbol{\tau}_{t}) \\
& \text{\hspace{5em}(Assume $\boldsymbol\pi = \prod \pi^{i}$, then $\sum_{i}\sum_{\pi} [\boldsymbol{\alpha}\pi^{i} ({a}_{t}^{i} |{\tau}_{t}^{i})\log {\pi^{i}}({a}_{t}^{i} |{\tau}_{t}^{i})] = \sum_{\pi} \boldsymbol{\alpha} \log \boldsymbol{\pi}(\boldsymbol{a}_{t} |\boldsymbol{\tau}_{t})$}) \\
& = \mathbb{E}_{\pi}[ Q^{\pi}_{tot}(\boldsymbol{s_{t}}, \boldsymbol{\tau_{t}}, \boldsymbol{a}_{t})-\boldsymbol\alpha \log \boldsymbol{\pi}(\boldsymbol{a}_{t} | \boldsymbol{\tau}_{t})]\\
& \text{\hspace{5em} Which then complies to the original soft-actor-critic policy update policy.}
\end{aligned}
\end{equation*}

We use the derivation above to show that directly using $\boldsymbol{q}(\tau_{t}, a_{t},m_{t}) - \boldsymbol{\alpha} \log \boldsymbol{\pi(a_{t}| \tau_{t}})$ as input to feed in the mixing network to serve as soft-actor-critic policy update policy in a value decomposition method. It holds when using a single-layer mixing network without activation function, but nevertheless it offers insights of the proposed design, and when using relu activation function, it can be served as a lower bound object for optimization.

\begin{algorithm}[h]
    \caption{pseudocode for training PAC  }
    \begin{algorithmic}[1]
        \For{$k = 0 $ to $max\_train\_steps$}                    
            \State {Initiate environment, critic network $q$, mixing network $Q^*, Q_{tot}$, policy network $\pi$, message encoder $m$}
            \State {Initiate Replay buffer $\mathcal{D}$}
            \For{$t = 0 $ to $max\_episode\_limits$}    
              \State For each agent $i$, take action $a_{i}\sim \pi_{i}$
              \State Execute joint action $\mathbf{a}$, observe reward $r$,
               
              \Statex \quad\quad\quad and observation $\boldsymbol{\tau}$, next state $s_{t+1}$
              \State Store ($\boldsymbol{\tau}$, $\boldsymbol{a}, r, \boldsymbol{\tau^{'}}$) in replay buffer  $\mathcal{D}$
            \EndFor
        \For{t = 1 to T}
            \State Sample trajectory minibatch $\mathcal{B}$ from $\mathcal{D}$
            \State Generate peer-assisted information
            \Statex \quad\quad\quad $m_{i}\!\sim\boldsymbol{N}(f_{m}(o _{i};\theta _{m}),\boldsymbol{I}))$, for $i = 0$ to $n$
            
            \State Calculate Loss 
            \Statex \quad\quad\quad $\mathcal{L}(\theta) =\mathcal{L}_{LP} + \mathcal{L}_{CA} + \mathcal{L}_{IB} + \mathcal{L}_{\hat{Q^{*}}} + \mathcal{L}_{{Q_{tot}}}$
            
            \State Update critic network and mixing network
            \Statex \quad\quad\quad $\boldsymbol{\theta_{nn}(q, Q^*, Q_{tot})} \gets \eta\hat{\nabla}\mathcal{L}(\theta)$ 
            \State Update policy network 
            \Statex \quad\quad\quad $\boldsymbol{\theta}(\pi) \gets \eta\hat{\nabla}\mathcal{L}(\pi)$

            \State Update encoding network 
            \Statex \quad\quad\quad $\boldsymbol{\theta}_{m}(m) \gets \eta\hat{\nabla}\mathcal{L}(\theta)$
            \State  Update temperature parameter
            \Statex \quad\quad\quad  $\alpha \gets \eta \hat{\nabla}\alpha$ 
            \If{$t$ mod $d$ = 0}
                \State Update target networks: $\boldsymbol{\theta^{-}} \gets \boldsymbol{\theta} $
            \EndIf
        \EndFor
        \EndFor
        \State Return $\boldsymbol{\pi}$
    \end{algorithmic}
\end{algorithm}

\subsection{Environment Details}
We  use more recent baselines (i.e., FOP and DOP) that are known to outperform QTRAN \cite{son2019qtran} and QPLEX \cite{wang2020qplex} in the evaluation. In general, we tend to choose baselines that are more closely related to our work and most recent. This motivated the choice of QMIX (baseline for value-based factorization methods), WQMIX (close to our work that uses weighted projections so better joint actions can be emphasized), NDQ \cite{wang2019learning} (which similarly uses common information to assist decision making but as generating messages for agent-wise communication), VDAC \cite{su2020value}, FOP \cite{zhang2021fop}, DOP \cite{wang2020dop} (SOTA actor-critic based methods).   Our code implementation is available at \href{https://github.com/hanhanAnderson/PAC-MARL}{Github}.

\subsubsection{Multi-State Matrix Game}
To highlight the importance of the extra state information for an assisted value function factorization, we present a multi-state matrix game as inspired by the single-state matrix game proposed in \cite{son2019qtran} and present how our method performs compared with the existing works. 

The multi-state matrix game and the detailed mlearning results for more algorithms as shown in table 1 and table 2.

The multi-state matrix game can be considered the single state matrix game with the same goal of factorizing the global value, consider an Markov decision process (MDP) consisting of 2 states with 0.5 transition probabilities between them and two payoff matrices shown in table 1(a). Suppose that agent 1 has the same partial observation $o_1$ in states $s^{(1)}$ and $s^{(2)}$. Then, its per-agent value function $q_1(\cdot,\tau_{1})$ computed from partial observation $o_1$ are also the same in both states. Due to the monotonicity of the mixing network (even though it is provided with complete joint state information), for any $u_1$ and $u_1'$ with ordering $q_1(u_1,\tau_{1})\ge q_1(u_1',\tau_{1})$ without loss of generality, we must simultaneously have $Q_{tot}(u_1,u_2,s^{(1)})\ge Q_{tot}(u_1',u_2,s^{(1)})$ and $Q_{tot}(u_1,u_2,s^{(2)})\ge Q_{tot}(u_1',u_2,s^{(2)})$ for any action $u_2$ of agent 2 in both states.

\begin{table}[h]
  \begin{subtable}[h]{0.45\textwidth}
    \centering
\begin{tabular}{|c|c|c|c|}
\hline
x & $\textbf{a}_2^1$  & $a_2^2$  & $a_1^1$  \\ \hline
$\textbf{a}_1^1$ & \textbf{4}  & -2 & -2 \\ \hline
$a_1^2$ & -2 & 0  & 0  \\ \hline
$a_1^3$ & -2 & 0  & 0  \\ \hline
\end{tabular}%
	~~
\begin{tabular}{|c|c|c|c|}
\hline
x & $a_2^1$  & $a_2^2$  & $\textbf{a}_1^1$  \\ \hline
$a_1^1$ & -2  & 0 & 0 \\ \hline
$a_1^2$ & \textbf{4} & -2  & -2  \\ \hline
$\textbf{a}_1^3$ & -2 & 0  & 0  \\ \hline
\end{tabular}%
	~~
   \caption{Payoff matrix for state $s_1$ and $s_2$}
  \end{subtable}
  \hfill
  \begin{subtable}[h]{0.455\textwidth}
    \centering
    \setlength\tabcolsep{3pt}
\begin{tabular}{|c|c|c|c|}
\hline
x  & \textbf{0.3}  & -1.2  & -2.6  \\ \hline
-0.2 & 0.1  & -1.0 & -1.0 \\ \hline
\textbf{0.3} & \textbf{1.1} & -0.9  & -1.0  \\ \hline
{-2.6} & -1.0 & -1.0  & {-1.0}  \\ \hline
\end{tabular}%
	~~
\begin{tabular}{|c|c|c|c|}
\hline
x  & \textbf{0.4}  & -1.3  & -2.2  \\ \hline
-0.2 & 0.4  & -1.0 & -1.0 \\ \hline
\textbf{0.3} & \textbf{1.4} & -0.9  & -1.0  \\ \hline
{-2.6} & -1.0 & -1.0  & {-1.0}  \\ \hline
\end{tabular}%
	~~
\caption{QMIX: $Q_{tot}$($s_1$), $Q_{tot}$($s_2$)}

  \end{subtable}
  \begin{subtable}[h]{0.45\textwidth}
    \centering
    \setlength\tabcolsep{3pt}
\begin{tabular}{|c|c|c|c|}
\hline
x  & 1.0  & 0.2  & 0.2  \\ \hline
1.0 & \textbf{4.0}  & -1.2 & -1.2 \\ \hline
0.0 & -0.2 & -1.6  & -1.6  \\ \hline
0.0 & -0.2 & -1.6  & -1.6  \\ \hline
\end{tabular}%
	~~
\begin{tabular}{|c|c|c|c|}
\hline
x  & 0.0  & 0.5  & \textbf{0.5}  \\ \hline
\textbf{1.0} & {-1.2}  & 0.1 & \textbf{0.1} \\ \hline
0.0 & -1.6 & -1.2  & -1.2  \\ \hline
0.0 & -1.6 & -1.2  & -1.2  \\ \hline
\end{tabular}%
	~~
   \caption{WQMIX: ${Q}_{tot}$($s_1$), ${Q}_{tot}$($s_2$)}
  \end{subtable}
  \hfill
\begin{subtable}[h]{0.45\textwidth}
    \centering
    \setlength\tabcolsep{3pt}
\begin{tabular}{|c|c|c|c|}
\hline
x  & 1.0  & 0.2  & 0.2  \\ \hline
1.0 & \textbf{4.0}  & -1.8 & -1.9 \\ \hline
0.0 & -1.9 & 0.0  & 0.1  \\ \hline
0.0 & -2.1 & -0.1  & 0.0  \\ \hline
\end{tabular}%
	~~
\begin{tabular}{|c|c|c|c|}
\hline
x  & 0.0  & 0.5  & \textbf{0.5}  \\ \hline
\textbf{1.0} & {-2.0}  & 0.0 & \textbf{0.0} \\ \hline
0.0 & 0.1 & -2.0  & -2.0  \\ \hline
0.0 & -2.2 & -0.0  & -0.0  \\ \hline
\end{tabular}%
	~~
   \caption{WQMIX: $\hat{Q}^{*}$($s_1$), $\hat{Q}^{*}$($s_2$)}
  \end{subtable}
  \hfill
  \begin{subtable}[h]{0.45\textwidth}
    \centering
    \setlength\tabcolsep{3pt}
\begin{tabular}{|c|c|c|c|}
\hline
x  & \textbf{0.7}  & -2.0  & -2.1  \\ \hline
\textbf{0.7} & \textbf{4.0}  & -2.1 & -2.4 \\ \hline
-2.0 & -2.1 & -2.4  & -2.4  \\ \hline
-2.1 & -2.1 & -2.4  & {-2.4}  \\ \hline
\end{tabular}%
	~~
\begin{tabular}{|c|c|c|c|}
\hline
x  & \textbf{0.6}  & -2.0  & -2.0  \\ \hline
-1.2 & -1.8  & -2.5 & -2.5 \\ \hline
\textbf{1.6} & \textbf{4.0} & -2.0  & -2.0  \\ \hline
-1.8 & -2.1 & -2.5  & {-2.5}  \\ \hline
\end{tabular}%
	~~
\caption{OURS: ${Q}_{tot}$($s_1$), ${Q}_{tot}$($s_2$)}
  \end{subtable}
    \hfill
\begin{subtable}[h]{0.45\textwidth}
    \centering
    \setlength\tabcolsep{3pt}
\begin{tabular}{|c|c|c|c|}
\hline
x  & \textbf{0.7}  & -2.0  & -2.1  \\ \hline
\textbf{0.7} & 4.0  & -2.1  & -2.1 \\ \hline
-2.0 & -2.1 & -0.1  & -0.1  \\ \hline
-2.1 & -2.1 & -0.1  & {-0.1}  \\ \hline
\end{tabular}%
	~~
\begin{tabular}{|c|c|c|c|}
\hline
x  & \textbf{0.7}  & -2.0  & -2.1  \\ \hline
-1.2 & -2.1  & -0.0 & -0.0 \\ \hline
\textbf{1.6} & \textbf{4.0} & -2.0  & -2.0  \\ \hline
-1.8 & -2.1 & -0.0  & {-0.0}  \\ \hline
\end{tabular}%
	~~
   \caption{OURS: $\hat{Q}^{*}$($s_1$), $\hat{Q}^{*}$($s_2$)}
  \end{subtable}
  \hfill

  \caption{Payoff matrix of the one-step multi-state non-monotonic cooperative matrix game and reconstructed results from corresponding baselines. State $s_1$ and $s_2$ are selectet randomly on equal probability. Boldface indicates the local and joint optimal actions from local utilities and action-state value function.}
  \label{tab:temps}
\end{table}
\begin{table}[h]

  \begin{subtable}[h]{0.45\textwidth}
    \centering
    \setlength\tabcolsep{3pt}
\begin{tabular}{|c|c|c|c|}
\hline
x  & \textbf{0.7}  & 0.2  & 0.2  \\ \hline
0.2 & {0.4}  & 0.2 & -0.2 \\ \hline
\textbf{0.7} & \textbf{0.7} & 0.4  & -0.4  \\ \hline
0.2 & 0.4 & 0.2  & 0.2  \\ \hline
\end{tabular}%
	~~
\begin{tabular}{|c|c|c|c|}
\hline
x  & \textbf{0.7}  & 0.2  & 0.2  \\ \hline
0.2 & {0.4}  & 0.2 & -0.2 \\ \hline
\textbf{0.7} & \textbf{0.7} & 0.4  & -0.4  \\ \hline
0.2 & 0.4 & 0.2  & 0.2  \\ \hline
\end{tabular}%
	~~
   \caption{DOP: ${Q}_{tot}$($s_1$), ${Q}_{tot}$($s_2$)}
  \end{subtable}
  \hfill
\begin{subtable}[h]{0.45\textwidth}
    \centering
    \setlength\tabcolsep{3pt}
\begin{tabular}{|c|c|c|c|}
\hline
x  & \textbf{0.7}  & 0.6  & -0.2  \\ \hline
\textbf{1.1} & \textbf{1.7}  & -0.9 & -0.1 \\ \hline
0.0 & -0.6 & 0.5  & -0.5  \\ \hline
0.0 & -0.6 & 0.5  &  -0.5  \\ \hline
\end{tabular}%
	~~
\begin{tabular}{|c|c|c|c|}
\hline
x  & \textbf{0.7}  & 0.6  & -0.2  \\ \hline
\textbf{1.1} & \textbf{1.7}  & -0.9 & -0.1 \\ \hline
0.0 & -0.6 & 0.5  & -0.5  \\ \hline
0.0 & -0.6 & 0.5  &  -0.5  \\ \hline
\end{tabular}%
	~~
   \caption{FOP: ${Q}_{tot}$($s_1$), ${Q}_{tot}$($s_2$)}
  \end{subtable}
  \hfill
  \begin{subtable}[h]{0.45\textwidth}
    \centering
    \setlength\tabcolsep{3pt}
\begin{tabular}{|c|c|c|c|}
\hline
x  & \textbf{0.7}  & 0.2  & 0.2  \\ \hline
\textbf{0.7} & {\textbf{2.3}}  & 0.9 & 0.9 \\ \hline
0.2 & 2.3 & 0.9  & 0.9  \\ \hline
0.2 & 2.2 & 0.9  & 0.9  \\ \hline
\end{tabular}%
	~~
\begin{tabular}{|c|c|c|c|}
\hline
x  & \textbf{0.7}  & 0.2  & 0.2  \\ \hline
\textbf{0.7} & {\textbf{3.2}}  & 2.2 & 2.2 \\ \hline
0.2 & 3.1 & 2.2  & 2.2  \\ \hline
0.2 & 3.1 & 2.2  & 2.2  \\ \hline
\end{tabular}%
	~~
\caption{VDAC: ${Q}_{tot}$($s_1$), ${Q}_{tot}$($s_2$)}
  \end{subtable}
    \hfill
\begin{subtable}[h]{0.45\textwidth}
    \centering
    \setlength\tabcolsep{3pt}
\begin{tabular}{|c|c|c|c|}
\hline
x  & \textbf{0.5}  & 0.3  & 0.2  \\ \hline
0.3 & {1.2}  & -0.9 & -1.2 \\ \hline
\textbf{1.2} & \textbf{1.2} & -1.0  & -0.5  \\ \hline
0.2 & -0.7 & -0.9  & -0.1  \\ \hline
\end{tabular}%
	~~
\begin{tabular}{|c|c|c|c|}
\hline
x  & \textbf{0.6}  & 0.2  & 0.1  \\ \hline
\textbf{0.5} & {\textbf{0.7}}  & -1.2 & -0.0 \\ \hline
0.3 & 0.6 & -1.1  & -0.8  \\ \hline
0.2 & -2.1 & -0.1  & -1.0  \\ \hline
\end{tabular}%
	~~
   \caption{QTRAN: ${Q}_{tot}$($s_1$), ${Q}_{tot}$($s_2$)}
  \end{subtable}
  \hfill

  \caption{Matrix results for other Benchmarks.}
  \label{tab:temps}
\end{table}

\subsubsection{Predator-Prey}
A partially observable environment on a  grid-world predator-prey task is used to model relative overgeneralization problem \cite{bohmer2020deep} where 8 agents have to catch 8 prey in a 10 × 10 grid. Each agent can either move in one of the 4 compass directions, remain still, or try to
catch any adjacent prey. Impossible actions, i.e., moves into an occupied target position or catching when there is no adjacent prey, are treated as unavailable. If two adjacent agents execute the catch action, a prey is caught and both the prey and the catching agents are removed from the grid. An agent’s observation is a 5 × 5 sub-grid centered around it, with one channel showing agents and another indicating prey.  An episode ends if all agents have been removed or after 200 time steps. Capturing a prey is rewarded with r = 10, but unsuccessful attempts by single agents are punished by a negative reward p. In this paper we consider two sets of experiments with $p$ = 0 and $p$ = -2. The task is similart to matrix game proposed by \cite{son2019qtran} but significantly more complex, both in terms of the optimal policy and in the number of agents.

\begin{table}[h]
\centering
\resizebox{\textwidth}{!}{%
\begin{tabular}{@{}ccc@{}}
\toprule
map          & Ally Units                          & Enemy Units                         \\ \midrule
1c3s5z       & 1 Colossus, 3 Stalkers \& 5 Zealots & 1 Colossus, 3 Stalkers \& 5 Zealots \\
3m           & 3 Marines                           & 3 Marines                           \\
3s5z         & 3 Stalkers \& 5 Zealots             & 3 Stalkers \& 5 Zealots             \\
8m           & 8 Marines                           & 8 Marines                           \\ \midrule
3s\_vs\_5z   & 3 Stalkers                          & 5 Zealots                           \\
5m\_vs\_6m   & 5 Marines                           & 6 Marines                           \\
MMM2         & 1 Medivac, 2 Marauders \& 7 Marines & 1 Medivac, 3 Marauders \& 8 Marines \\
27m\_vs\_30m & 27 Marines                          & 30 Marines                          \\
6h\_vs\_8z   & 6 Hydralisks                        & 8 Zealots                           \\
corridor     & 6 Zealots                           & 24 Zerglings                        \\ \bottomrule
\end{tabular}%
}
\caption{Brief Introduction of SMAC map scenarios used in experiments}
\label{tab:my-table}
\end{table}

\begin{figure*}[h]
     \centering
     \begin{subfigure}[b]{0.45\textwidth}
         \centering
         \includegraphics[width=\textwidth]{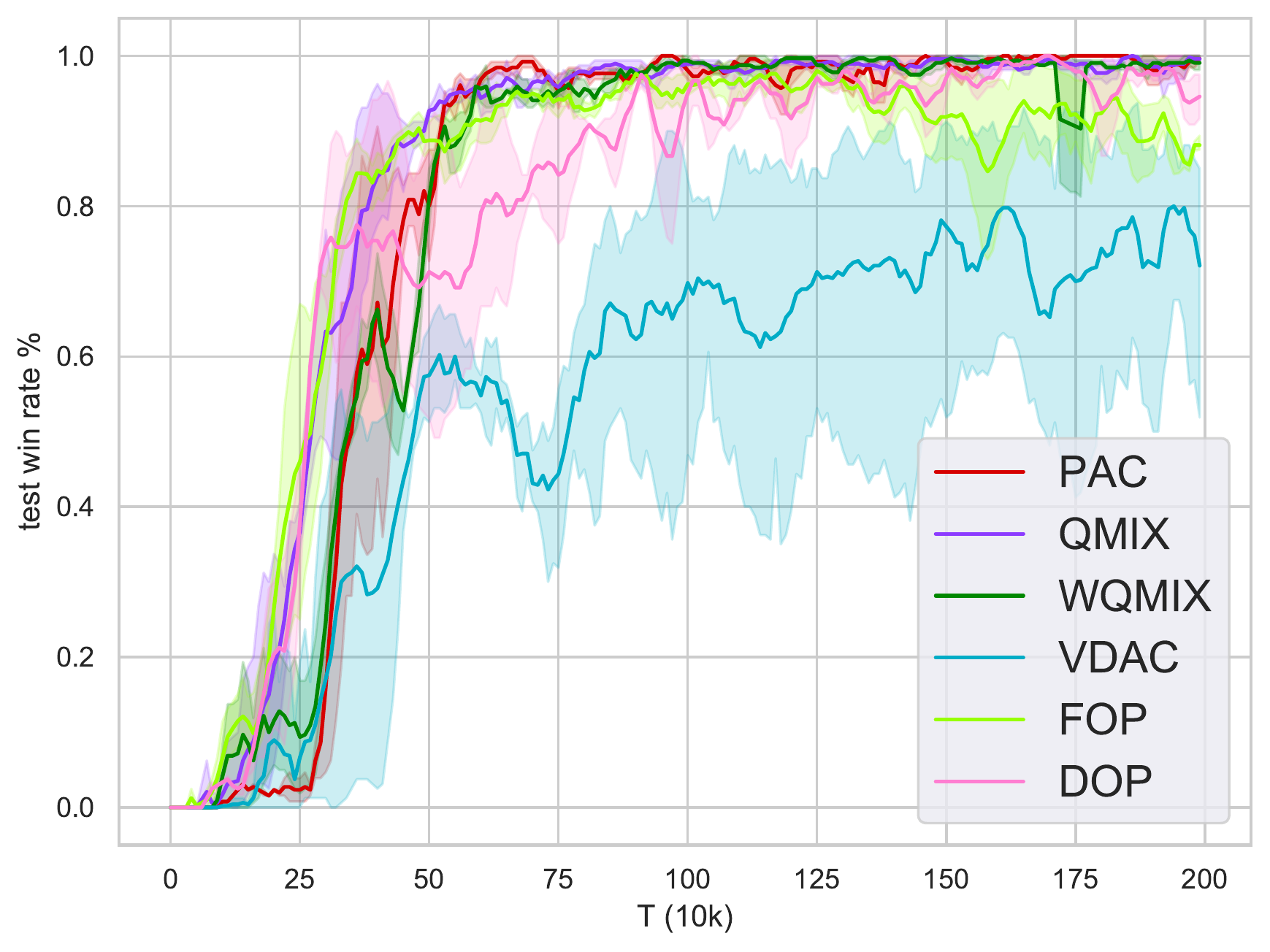}
         \caption{1c3s5z}
         \label{fig:one}
     \end{subfigure}
     \hfill
     \begin{subfigure}[b]{0.45\textwidth}
         \centering
         \includegraphics[width=\textwidth]{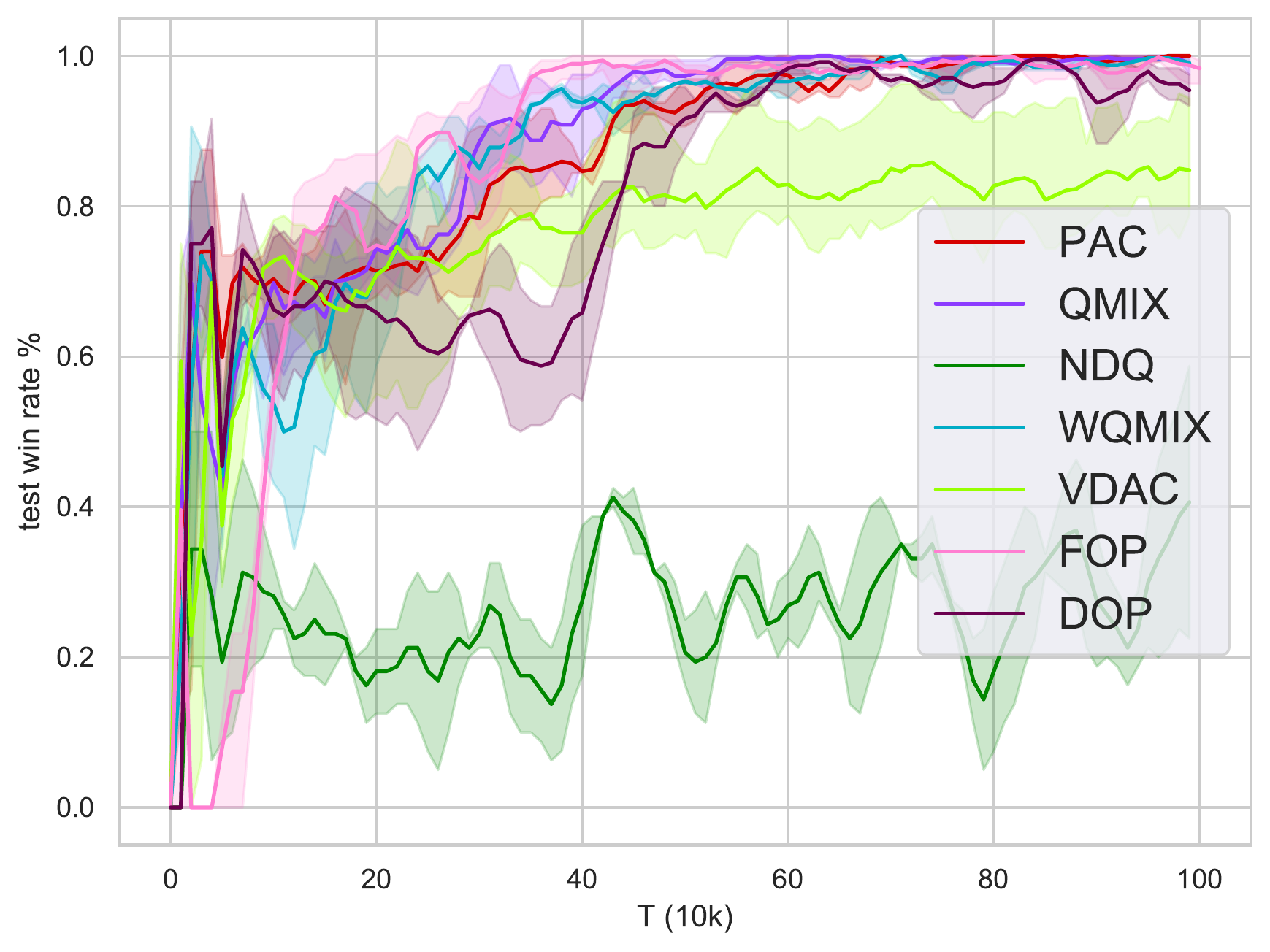}
        \caption{3m}
         \label{fig: two}
     \end{subfigure}
     \hfill
     
     \centering     
     \begin{subfigure}[b]{0.45\textwidth}
         \centering
         \includegraphics[width=\textwidth]{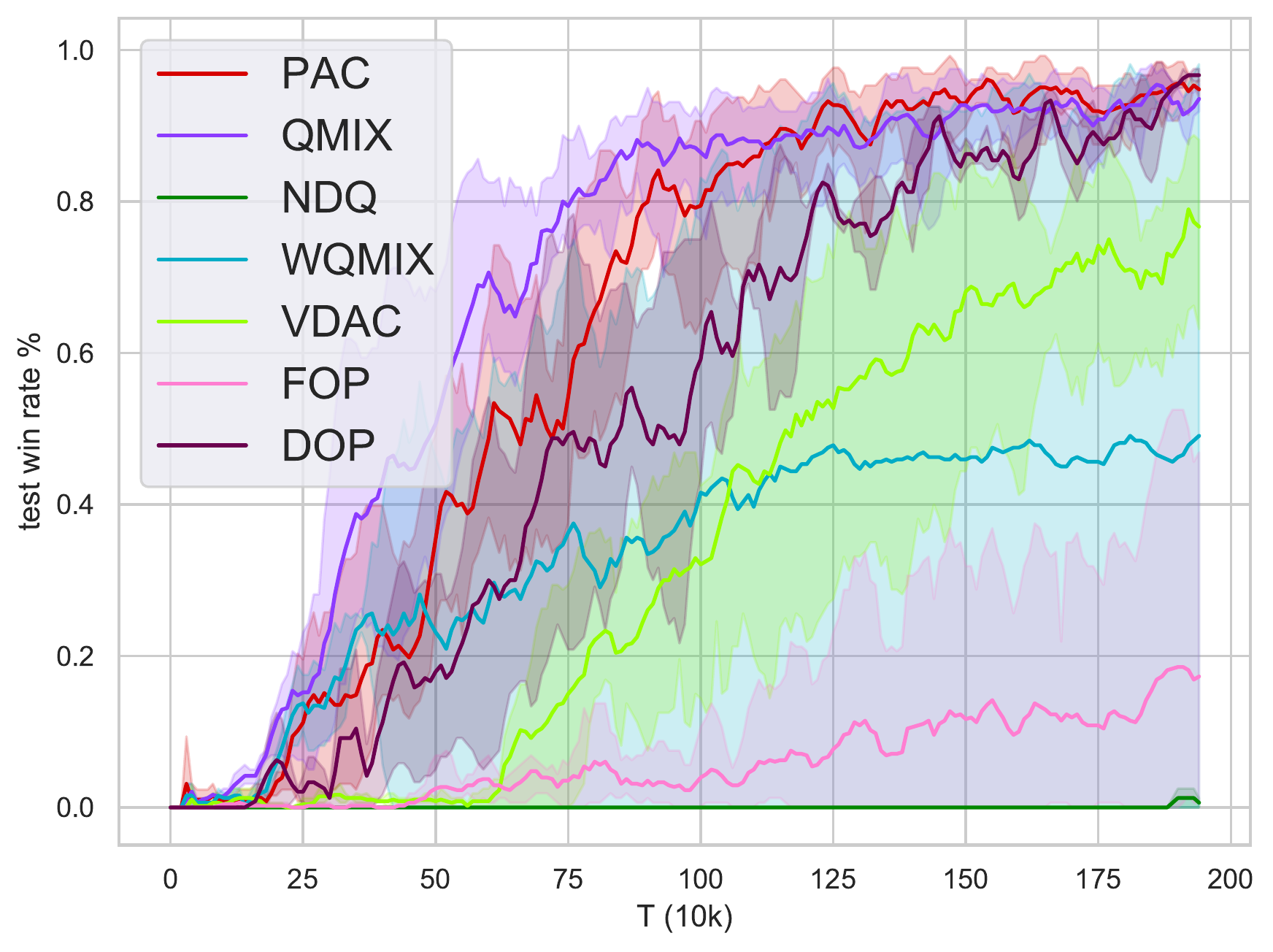}
        \caption{3s5z }
         \label{fig:tres}
     \end{subfigure}
     \hfill
    \begin{subfigure}[b]{0.45\textwidth}
         \centering
         \includegraphics[width=\textwidth]{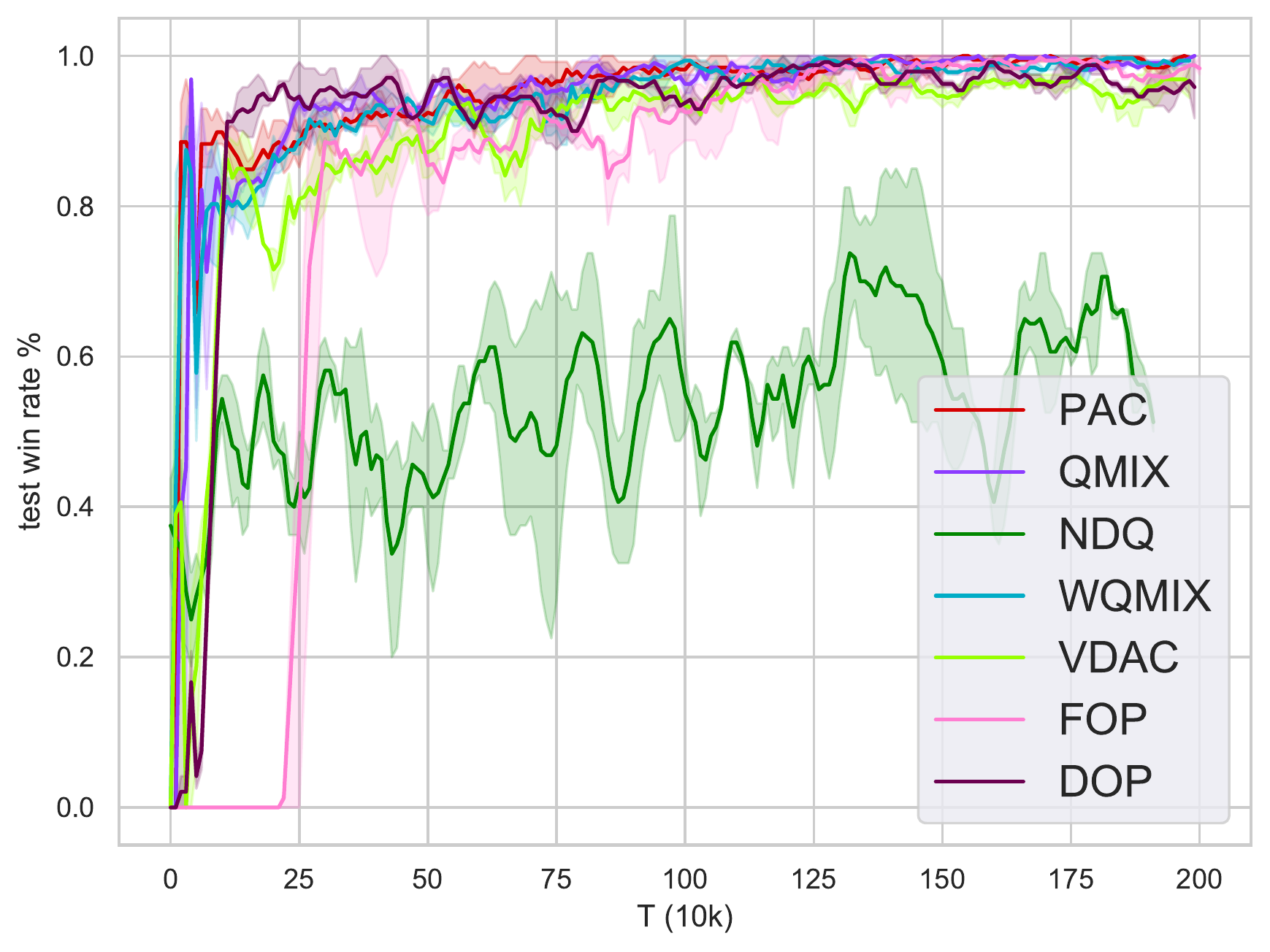}
        \caption{8m }
         \label{fig:quatro}
     \end{subfigure}
    \caption{Additional results on SMAC benchmark}
\end{figure*}

\subsubsection{SMAC}
For the experiments on StarCraft II micromanagement, we follow the setup of SMAC \cite{samvelyan2019starcraft} with open-source implementation including QMIX \cite{rashid2018qmix}, WQMIX \cite{rashid2020weighted}, NDQ \cite{wang2019learning}, FOP \cite{zhang2021fop}, DOP \cite{wang2020dop} and VDAC \cite{su2020value}. We consider combat scenarios where the enemy units are controlled by the StarCraft II built-in AI and the friendly units are controlled by the algorithm-trained agent. The possible options for built-in AI difficulties are Very Easy, Easy, Medium, Hard, Very Hard, and Insane, ranging from 0 to 7. We carry out the experiments with ally units controlled by a learning agent while built-in AI controls the enemy units with difficulty = 7 (Insane). Depending on the specific scenarios(maps), the units of the enemy and friendly can be symmetric or asymmetric. At each time step each agent chooses one action from discrete action space, including noop, move[direction], attack[enemy\_id] and stop. Dead units can only choose noop action. Killing an enemy unit will result in a reward of 10 while winning by eliminating all enemy units will result in a reward of 200. The global state information are only available in the centralized critic. Each baseline algorithm is trained with 4 random seeds and evaluated every 10k training steps with 32 testing episodes for main results, and with 3 random seeds for ablation results and additional results. We carried out our experiment on a Nvidia GeForce RTX 2080 Ti workstation, on average it takes 3.5 hours to finish 3s5z map on SMAC environment for one run.


\textbf{Additional Results} 
We present additional results on easier maps including \texttt{1c3s5z}, \texttt{3m}, \texttt{3s5z}, and \texttt{8m} in Fig. 1.

\begin{table}[h]
\centering
\resizebox{\textwidth}{!}{%
\begin{tabular}{|l|l|l|l|l|}
\hline
                          & self-updating alpha & Q * Network & L\_ca           & L\_ib \\ \hline
PAC                   & Y                   & Y           & Y               & Y     \\ \hline
PAC\_no\_info         & Y                   & Y           & Y               &           \\ \hline
PAC\_fixed\_alpha     & alpha = 1.0         & Y           & Y               & Y     \\ \hline
PAC\_CE\_Loss         & Y                   & Y           & replace with CE & Y     \\ \hline
\textbf{PAC\_disabled*} & Y                   & Y           &                     &           \\ \hline
PAC\_No\_Q           & Y                   &                 &                     &           \\ \hline
\end{tabular}%
}
\caption{Comparison of our method and its ablated versions}
\label{undefined}
\end{table}

\subsubsection{Implementation details and Hyper-parameters}
In this section we introduce the implementation details and hyper-parameters we used in the experiment. Recently \cite{hu2021riit} demonstrated that MARL algorithms are significantly influenced by code-level optimization and other tricks, e.g. using TD-lambda, Adam optimizer and grid-searched hyper-parameters (where many state-of-the-art are already adopted), and proposed fine-tuned QMIX and WQMIX, which is demonstrated with significant improvements from their original implementation. We implemented our algorithm based on its open-sourced codebase and acquired the results of QMIX and WQMIX from it.
We use one set of hyper-parameters for each environment, i.e., no tuned hyper-para for individual maps. Unless otherwise mentioned, we keep the same setting for common hyper-parameters shared by all algorithms, e.g. learning rate, and keep their unique hyper-parameters to their default settings. 

We use epsilon greedy for action selection with annealing from $\epsilon$ = 0.995 decreasing to $\epsilon$ = 0.05 in 100000 training steps in a linear way. 

Batch size $bs$ = 128, replay buffer size = 10000

Target network update interval: every 200 episodes

$\beta$ = 0.001, since $o_i$ and $\hat{u}^*_i$ are within similar dimensions and thus does not require very high compression.

Weights $w$ =0.5 in weighting functions.

learning rate $lr$ = 0.001

td lambda $\lambda$ = 0.6

initial entropy term log$\alpha$ = -0.07, with its learning rate $lr_{\alpha}$ = 0.0003

performance for each algorithm is evaluated for 32 episodes every 1000 training steps.

\subsubsection{Hyperparameter-tuning for ablation studies}

To fully demonstrate the effectiveness of each components, we performed hyperparameter-tuning for each ablated settings. Specifically, we choose exploration steps (epsilon anneal time = [50k, 100k], where $\epsilon$ decays from 0.095 to 0.05 in $\epsilon$- greedy), eligibility traces ($\lambda$ = [0.3, 0.5, 0.6] in TD-$\lambda$), and replay-buffer size (buffer = [5000, 10000]) and use the results with best performance to serve as the ablated results. We show the optimal hyperparameter setting for each ablated versions and their corresponding final winning rates on SMAC environment as in Table 4.

\begin{table}[h]
\centering
\resizebox{\textwidth}{!}{%
\begin{tabular}{|c|c|c|}
\hline
Setting       & Optimal Hyperparameter Setting                                               & Average Winning Rates \\ \hline
PAC           & $\lambda$=0.5 buffer\_size=10000, episilon\_anneal\_time=100000 & 0.87                  \\ \hline
PAC\_Fixed\_alpha & $\lambda$=0.6 buffer\_size=5000, episilon\_anneal\_time=80000 & 0.80 \\ \hline
PAC\_no\_info & $\lambda$=0.6 buffer\_size=10000, episilon\_anneal\_time=100000 & 0.51                  \\ \hline
PAC\_CE\_loss & $\lambda$=0.5 buffer\_size=10000, episilon\_anneal\_time=100000 & 0.61                  \\ \hline
PAC\_disabled & $\lambda$=0.5 buffer\_size=5000, episilon\_anneal\_time=80000   & 0.49                  \\ \hline
PAC\_No\_Q*   & $\lambda$=0.6 buffer\_size=10000, episilon\_anneal\_time=100000 & 0.25                  \\ \hline
\end{tabular}%
}
\caption{Optimal Hyperparameter Setting for ablated versions}
\label{undefined}
\end{table}

\end{document}